\newtheorem{theorem}{Theorem}
\newtheorem{lemma}{Lemma}
\renewcommand{\a}{\alpha}
\def\algname{LSF-Join}
\def\filname{Fast-Filter}
    \newcommand\email[1]{\_email #1\q_nil}
    \def\_email#1@#2\q_nil{%
      \href{mailto:#1@#2}{{\emailfont #1\emailampersat #2}}
    }
    \newcommand\emailfont{\small \sffamily}
    \newcommand\emailampersat{{\small@}}
	\title{LSF-Join: Locality Sensitive Filtering for Distributed \\All-Pairs Set Similarity Under Skew}
    \author{
    Cyrus Rashtchian\\UCSD\\
    \email{crashtchian@eng.ucsd.edu}
    \and
    Aneesh Sharma\\Google\\
    \email{aneesh@google.com}
    \and 
    David P. Woodruff\\CMU\\
    \email{dwoodruf@cs.cmu.edu}
    }
\begin{document}
\maketitle
	\begin{abstract}
	All-pairs set similarity is a widely used data mining task, even for large and high-dimensional datasets. Traditionally, similarity search has focused on discovering very similar pairs, for which a variety of efficient algorithms are known. However, recent work highlights the importance of finding pairs of sets with relatively small intersection sizes. For example, in a recommender system, two users may be alike even though their interests only overlap on a small percentage of items. In such systems, some dimensions are often highly skewed because they are very popular. Together these two properties render previous approaches infeasible for large input sizes. 
	To address this problem, we present a new distributed algorithm, \algname{}, for approximate all-pairs set similarity. The core of our algorithm is a randomized selection procedure based on Locality Sensitive Filtering. 
	Our method deviates from prior approximate algorithms, which are based on Locality Sensitive Hashing.  Theoretically, we show that \algname{} efficiently finds most close pairs, even for small similarity thresholds and for skewed input sets. We prove guarantees on the communication, work, and maximum load of  \algname{}, and we also experimentally demonstrate its accuracy  on multiple graphs.
\end{abstract}

\section{Introduction}
Similarity search is a widely used primitive in data mining applications, and all-pairs similarity in particular is a common data mining operation~\cite{afrati-fuzzy, bayardo, mmds, xiao}. Motivated by recommender systems and social networks, we design algorithms for computing all-pairs set similarity (a.k.a., a set similarity join). In particular, we consider the similarity of nodes in terms of a bipartite graph. We wish to determine similar pairs of nodes from one side of the graph. For each node $v$ on the right, we consider its neighborhood $\Gamma(v)$ on the left. Equivalently, we can think of $\Gamma(v)$ as a set of the neighbors of $v$ in the graph. 
Using this representation, many graph-based similarity problems can be formulated as finding pairs of nodes with significantly overlapping neighborhoods.
We focus on the cosine similarity between pairs  $\Gamma(v)$ and $\Gamma(u)$ represented as high-dimensional vectors. 

Although set similarity search has received a lot of attention in the literature, there are three aspects of modern systems that have not been adequately addressed yet. Concretely, we aim to develop algorithms that come with provable guarantees and that handle the following three criteria: 
\begin{enumerate}
	\item {\bf Distributed and Scalable.} The algorithm should work well in a distributed environment like MapReduce, and should scale to large graphs using a large number of processors.  
	\item {\bf Low Similarity.} The algorithm should output most pairs of sets with relatively low normalized set similarity, such as a setting of cosine similarity $\tau$ taking values $0.1 \leq \tau \leq 0.5.$
	\item {\bf Extreme Skew.} The algorithm should provably work well even when the dimensions (degrees on the left) are highly irregular and skewed.
\end{enumerate}

The motivation for these criteria comes from recommender systems and social networks. For the first criteria, we consider graphs with a large number of vertices. For the second, we wish to find pairs of nodes that are semantically similar without having a large cosine value. This situation is common in collaborative filtering and user similarity~\cite{SSG17}, where two users may be alike even though they overlap on a small number of items (e.g., songs, movies, or citations).
Figure~\ref{fig:sim-pair-distribution} depicts the close pair histogram of a real graph, where most similar pairs have low cosine similarity.
For the third criteria, skewness has come to recent attention as an important property~\cite{augsten2013similarity, mccauley2018set, zhu2016lsh}, and it can be thought of as power-law type behavior for degrees on the left. 
In contrast, most other prior work assumes that the graph has uniformly small degrees on the left~\cite{mann2016empirical, SSG17, silva-survey}. This smoothness assumption 
is reasonable in settings when the graph is curated by manual actions (e.g., Twitter follow graph).  However, this 
is too restrictive in some settings, such as a graph of documents and entities, where entities can legitimately have high degrees, and throwing away these entities may remove a substantial source of information.
Another illustration of this phenomenon can be observed even on human-curated graphs, e.g., the Twitter follow graph, where computing similarities among consumers (instead of producers, as in~\cite{SSG17}) runs into a similar issue.

Previous work fails to handle all three of the above criteria.  
When finding low similarity items (e.g., cosine similarity $< 0.5$), standard techniques like Locality-Sensitive Hashing~\cite{HIM, plsh, zhu2016lsh} are no longer effective (because the number of hashing iterations is too large). Recently, there have been several proposals for addressing this, and the closest one to ours is the wedge-sampling algorithm from~\cite{SSG17}. However, the approach in~\cite{SSG17} has one severe shortcoming: it requires that each dimension has a relatively low frequency (i.e., the bipartite graph has small left degrees). 

In this work, we address this gap by presenting a new distributed algorithm \algname{} for approximate all-pairs similarity that can scale to large graphs with high skewness. As a main contribution, we provide theoretical guarantees on our algorithm, showing that it achieves very high accuracy. We also provide guarantees on the communication, work, and maximum load in a distributed environment with a very large number of processors. 

Our approach uses Locality Sensitive Filtering (LSF)~\cite{christiani2017framework}. This is a variant of the ideas used for Locality Sensitive Hashing (LSH). The main difference between LSF and LSH is that the LSF constructs a single group of surviving elements based on a hash function (for each iteration). In contrast, LSH constructs a whole hash table, each time, for a large number of iterations. While the hashing and sampling ideas are similar, the benefit of LSF is in its computation and communication costs. Specifically, our LSF scheme will have the property that if an element $v$ survives in $k'$ out of $k$ total hash functions, then the computation scales with $k'$ and not $k$. For low similarity elements, $k'$ is usually substantially smaller than $k$, resulting in a lower overall cost (for example $k'$ will be sublinear, while $k$ is linear, in the input size). We also provide an efficient way to execute this filtering step on a per-node basis. 

Our LSF procedure can also be a viewed as a pre-processing step before applying any all-pairs similarity algorithm (even one needing a smaller problem size and a graph without skew). The reason is that the survival procedure outputs a number of smaller subsets of the original dataset, each with a different, smaller set of dimensions, along with a guarantee that no dimension has a high degree. The procedure also ensures that similar pairs are preserved with high probability. Then, after performing this filtering, we may use other steps to improve the computation time. For example, applying a hashing technique may reduce the effective dimensionality without affecting the similarity structure. 

\subsubsection*{Problem Set-up}
The input consists of a bipartite graph $G$ with a set of $M$ vertices on the left and $N$ vertices on the right. We denote that graph as $G=(U,V,E)$, and we refer to $U$ as the set of {\em dimensions}, and to $V$ as the set of {\em nodes}. Given a parameter $\tau > 0$,
we want to output all similar pairs of nodes $(v,v')$ from $V$ such that 
$$
\frac{|\Gamma(v)\cap \Gamma(v')|}{\sqrt{|\Gamma(v)|\cdot|\Gamma(v')|}} \geq\tau.
$$ 
This problem also encapsulates other objectives, such as finding top-$k$ results per node. 
Note that we could equivalently identify each node $v$ with the {\em set} of its neighbors $\Gamma(v) \subseteq U$, and hence, this problem is the same as the set similarity join problem with input $\{\Gamma(v) \mid v \in V\}$ and threshold $\tau$ for cosine similarity. 
We describe our algorithm in a MapReduce-like framework, and we analyze it in the massively parallel computation model~\cite{beame2013communication, koutris2018algorithmic}, which captures the theoretical properties of MapReduce-inspired models (e.g.,~\cite{afrati-dist, karloff2010model}).
\begin{figure}
    \centering
    \includegraphics[scale=0.37]{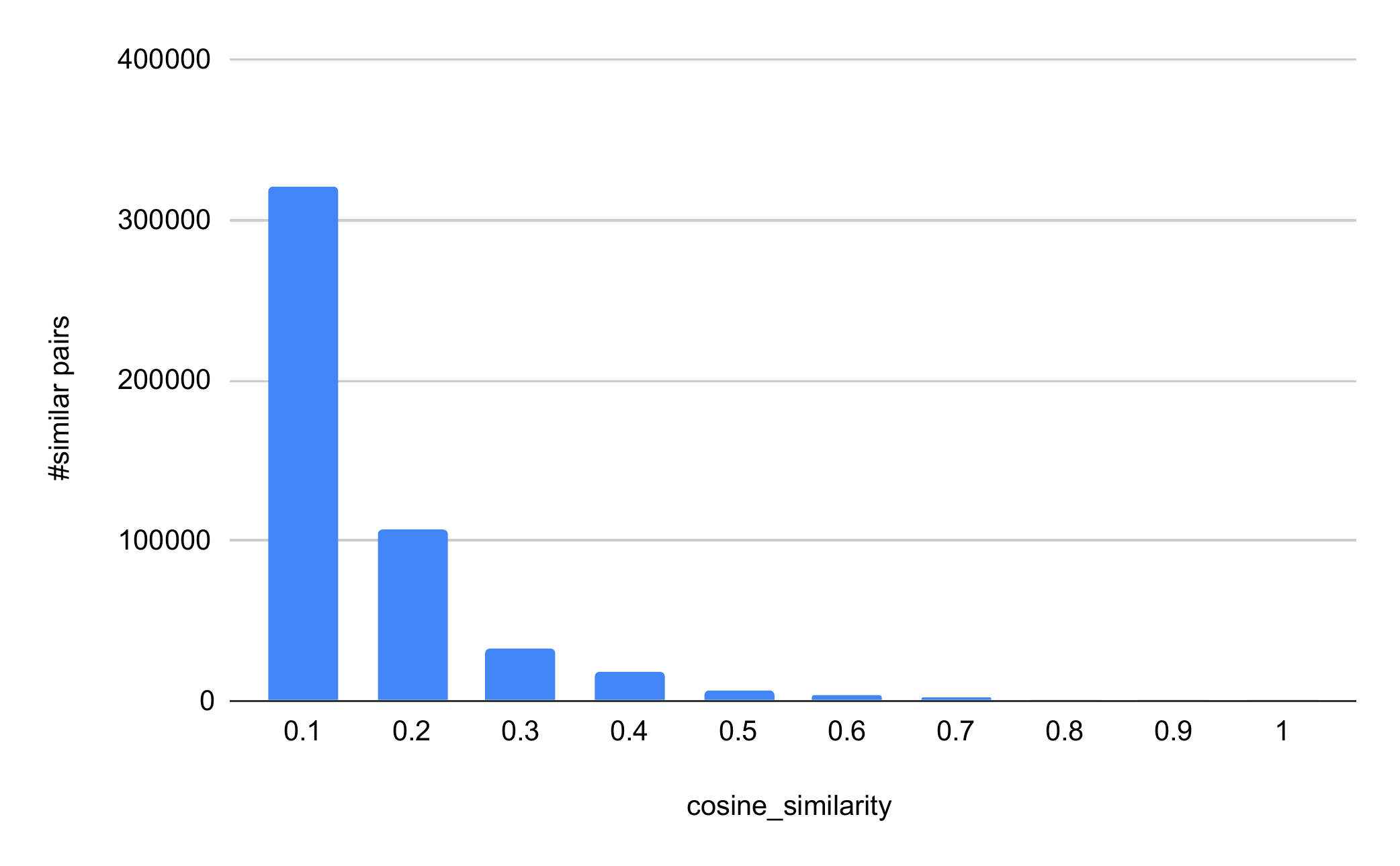}
    \caption{Histogram of the similar pairs at varying cosine similarity thresholds $\tau$ for a citation network. 
    The majority of pairs are concentrated at cosine similarity $\tau \approx 0.1$. 
    }
    \label{fig:sim-pair-distribution}
\end{figure}
We have $p$ processors, in a shared-nothing distributed environment. The input data starts arbitrarily partitioned among the processors. 
Associated to each node $v$ on the right is a vector $\Gamma(v) \in \{0,1\}^M$ which is an indicator vector for the $|\Gamma(v)|$ neighbors of $v$ on the left. 
We would like to achieve the twin properties of load-balanced servers and low communication cost. 

\subsubsection*{Our Contributions}
The main contribution of our work is a new randomized, distributed algorithm, \algname{}, which provably finds almost all pairs of sets with cosine similarity above a given threshold~$\tau$. Our algorithm will satisfy all three of the criteria mentioned above (scalability, low similarity, and skewness). A key component of \algname{} is a new randomized LSF scheme, which we call the {\em survival procedure}. The goal of this procedure is to find subsets of the dataset that are likely to contain similar pairs. In other words, it acts as a filtering step. Our LSF procedure comes with many favorable empirical and theoretical properties. First, we can execute it in nearly-linear time, which allows it to scale to very large datasets. Second, we exhibit an efficient way to implement it in a distributed setting with a large number of processors, using only a single round of communication for the whole \algname{} algorithm. Third, the survival procedure leads to sub-quadratic local work, even when the dimensions are highly skewed and the similarity threshold is relatively low.  To achieve these properties, we demonstrate how to implement the filtering using efficient, pairwise independent hash functions, and we show that even in this setting, the algorithm has good provable guarantees on the accuracy and running time. 
We also present a number of theoretical optimizations that better illuminate the behavior of the algorithm on datasets with different structural properties. Finally, we empirically validate our results by testing \algname{} on multiple graphs.


\subsubsection*{Related Work}
Many filtering-based similarity join algorithms provide exact algorithms and rely on heuristics to improve the running time~\cite{alabduljalil2013optimizing, baraglia2010document, fier2018set, mann2016empirical, vernica, wang2012can,  wang2017leveraging, xiao}. We primarily review prior work that is relevant to our setting and provides theoretical guarantees.

One related work uses LSF for set similarity search and join on skewed data~\cite{mccauley2018set}. Their {\em data dependent} method leads to a sequential algorithm based on the frequency of dimensions, improving a prior LSF-based algorithm~\cite{christiani2017framework}.
Unfortunately, it seems impossible to adapt their method to the one-round distributed setting.  
Another relevant result is the wedge-sampling approach in~\cite{SSG17}. 
They provide a distributed algorithm for low-similarity joins on large graphs. 
However, their algorithm assumes that the dataset is {\em not} skewed. 

In the massively-parallel computation model~\cite{beame2013communication, bks-skew}, multi-round algorithms have been developed that build off of LSH for approximate similarity joins, achieving output-optimal guarantees on the maximum load~\cite{hu2019output, mccauley2018adaptive}. However, it can be prohibitively expensive to use multiple rounds in modern shared-nothing clusters with a huge number of processors. 
In particular, the previous work achieves good guarantees only when the number of nodes $N$ and number of processors $p$ satisfy $N \geq p^{1+c}$ for a constant $c >0$. We focus on one-round algorithms, and we allow the possibility of $p = \Theta(N)$, which may be common in very large computing environments. 
Algorithms using LSH  work well when $\tau$ is large enough, such as $0.6 \leq \tau < 1.0$. However, for smaller $\tau$, LSH-based distributed algorithms require too much computation and/or communication due to the large number of repetitions~\cite{christiani2018scalable, lsh-practical, SSG17,yu2016generic}. Prior work has also studied finding {\em extremely} close pairs~\cite{afrati-anchor, afrati-fuzzy,beame2017massively} or finding pairs of sets with constant-size intersection~\cite{deng2018overlap}. These results do not apply to our setting because we aim to find pairs of large-cardinality sets with cosine similarity $\tau$ in the range $0.1 \leq \tau \leq 0.5$, and we allow for the intersection size to be large in magnitude. 

Finally, there are also conditional lower bounds showing that provably sub-quadratic time algorithms for all pairs set similarity (even approximate) may not exist in general~\cite{ahle2016complexity, pagh2019hardness}.

\section{The \algname{} Algorithm}
We start with a high-level overview of our set similarity join algorithm, \algname{}, which is based on a novel and effective LSF scheme. Let $G = (U,V,E)$ be the input graph with $|U|=M$ dimensions on the left, and $|V| = N$ nodes on the right. For convenience, we refer to the vertices $V$ and their indices $[N]$ interchangeably, where we use $[N]$ to denote the set $\{1,2,\ldots, N\}$.

The \algname{} algorithm uses $k$ independent repetitions of our filtering scheme (where $k \approx N$ achieves the  best tradeoff). In the $i$-th repetition we create a set $S_i \subseteq [N]$ of {\it survivors} of the set $[N]$ of vertices on the right. We will define the LSF procedure shortly, which will determine the subsets $\{S_i\}_{i=1}^k$ in a data-independent fashion. 
During the communication phase, the survival sets will be distributed in their entirety across the processors. In particular, if there are $p$ processors, then each processor will handle roughly $k/p$ different repetitions.  During the local computation, the processors will locally compute all similar pairs in $S_i$ for $i\in [k]$ and output these pairs in aggregate (in a distributed fashion). As part of the theoretical analysis, we show that the size of each $S_i$ is concentrated around its mean, and therefore, our algorithm has balanced load across the processors. To achieve high recall of similar pairs, we will need to execute the \algname{} algorithm $O(\log N)$ times independently, so that the failure probability will be polynomially small. Fortunately, this only increases the communication and computation by a $O(\log N)$ factor. We execute the iterations in parallel, and \algname{} requires only one round of communication.   

\subsection{Constructing the Survival Sets $S_i$}

We now describe our LSF scheme, which boils down to describing how to construct the $S_i$ survival sets. 
We have two main parameters of interest: $\alpha \in (0,1/2]$ denotes the survival probability of a single dimension (on the left), and $k$ denotes the number of repetitions.   The simplest way to describe our LSF survival procedure goes via uniform random sampling. We refer to this straightforward scheme as the {\em Naive-Filter} method, and we describe it first. Then, we explain how to improve this method by using a pairwise independent filtering scheme, which will be much more efficient in practice. We refer to the improved LSF scheme as the {\em \filname{}} method. Later, we also show that \filname{} enjoys many of the same theoretical guarantees of Naive-Filter, with much lower computational cost.

\paragraph{Naive-Filter.}  For the naive version of our filtering scheme, consider a repetition number $i \in [k]$. We choose a uniformly random set $U_i \subseteq U$ of vertices on left by choosing each node $u \in U$ to be in $U_i$ with probability $\a$ independently. Then, we filter vertices $v$ on the right depending on whether their neighborhood is completely contained in $U_i$ or not (that is, whether $\Gamma(v) \subseteq  U_i$ or not). The $i$-th survival set $S_i$ will be the set of vertices $v \in V$ such that $\Gamma(v) \subseteq  U_i$. We repeat this process independently for each $i=1,2,\ldots, k$, to derive $k$ filtered sets of vertices $S_1,\ldots, S_k$. Notice that for each $i$, the probability that $v$ survives in $S_i$ is exactly $\a^{|\Gamma(v)|}$, where $|\Gamma(v)|$ is the number of neighbors of $v$ on the left. 

The intuition behind using this filtering method for set similarity search is that similar pairs are relatively likely to survive in the same set. Indeed, the chance that both $v$ and $v'$ survive in $S_i$ is equal to  $\a^{|\Gamma(v) \cup \Gamma(v')|}$. When the cosine similarity is large, we must have that $|\Gamma(v) \cap \Gamma(v')|$ is large and also that $|\Gamma(v) \cup \Gamma(v')|$ is much smaller than $|\Gamma(v)| + |\Gamma(v')|$. In other words, $v$ and $v'$ are more likely to survive together if they are similar, and less likely if they are very different. For example, consider the case where $d = |\Gamma(v)| = |\Gamma(v')|$ is a large constant. Then, pairs with cosine similarity at least $\tau$ will survive together with probability $\a^{(2-\tau)d}$. At the other extreme, disjoint pairs only survive together with probability $\a^{2d}$.

The main drawback of the Naive-Filter method is that it takes too much time to determine all indices $i$ such that $v \in S_i$. Consider the set of $v$'s neighbors $\Gamma(v)$. We need to determine whether $\Gamma(v) \subseteq U_i$ for every $i \in [k]$. Hence, it requires at least $O(\alpha |\Gamma(v)| k)$ work to compute the indices where $v$ survives, that is, the set $\{i : v \in S_i \}$. We will need to set $k \gg N$, and hence, the work of  Naive-Filter is linear in $N$ or worse for each node $v$. To improve upon this, our \filname{} method will have work proportional to $|\{i : v \in S_i \}|$, and we show that this is often considerably smaller than $k$.

\subsection{The \filname{} Method} 

The key idea behind our fast filtering method is to develop a pairwise independent filtering scheme that approximates the uniform sampling of the survival sets. We then devise a way to efficiently compute the survival sets on a per-node basis, by using fast matrix operations. More precisely, for each node $v$ on the right, \filname{} will determine the indices $I_v \subseteq [k]$ of survival sets in which $v$ survives (that is, we have $I_v = \{i : v \in S_i\}$). We develop a way to compute $I_v$ independent for each vertex $v$ by using Gaussian elimination on binary matrices. The \filname{} method only requires a small amount of shared randomness between the processors.

To describe the \filname{} method, it will be convenient  to assume that $\log_2(1/\a)$ and $\log_2 k$ are both integers. We now explain the pairwise independent filtering scheme. For each node $u \in U$ on the left, we sample a random $\log_2(1/\a) \times \log_2 k$ binary matrix $A'_u$ and a $\log_2 1/\a$-length 
bit-string $b'_u$. We identify each of the $k$ repetitions $i \in [k]$ with binary vectors in the $\log_2(k)$-dimensional vector space over $GF(2)$, the finite field with two elements. In other words, we use the binary representation of $i$ to associate $i$ with a length $\log_2 k$ bit-string, and we perform matrix and vector operations modulo two.
We abuse notation and use $i$ for both the integer and the bit-string, where context will distinguish the two.

\begin{algorithm}[t]
	\caption{\ Efficient LSF for a Single Node}
	\begin{algorithmic}[1]
		\Function{\filname{}}{\ $G$, $v$,\ $\alpha$,\ $k$\ }
		\State Compute $A^v$ and $b^v$ using the shared random seed
		\State Determine the solution space of $A^vi + b^v = 0$ 
		\State Let $I_v \leftarrow \{i : A^vi + b^v = 0\}$
		\State \Return{Return $I_v$} \hspace{.8in} //\ $I_v= \{i : v \in S_i\}$
		\EndFunction
	\end{algorithmic} \label{alg:filter}
\end{algorithm}

\begin{algorithm}[t]
	\caption{\ Approximate Cosine Similarity Join}
	\begin{algorithmic}[1]
	    \State Repeat the following procedure $O(\log N)$ times in parallel:
		\Function{\algname{}}{\ $G = (U,V,E)$, \ $\tau$, \ $\a$, \ $k$\ }
		\State {\bf For} each vertex $v \in V$ do in parallel:
		\State \hspace{.15in} {\sc \filname{}}$(G,v,\a,k)$ to determine sets containing $v$
		\State Partition the sets $S_1,\ldots, S_k$ across processors
		\State Locally compute all pairs in each $S_i$ with similarity $\geq \tau$
		\State Output all close pairs in a distributed fashion
		\EndFunction
	\end{algorithmic} \label{alg:main}
\end{algorithm}

To determine whether a node $v \in [N]$ survives in $S_i$, we perform the following operation.  We first stack the matrices $A'_u$ on 
top of each other for each of $v$'s neighbors $u \in \Gamma(v)$. This forms  a $|\Gamma(v)| \cdot \log_2(1/\a) \times \log_2 k$ matrix $A^v$. We also stack 
the vectors $b'_u$ on top of each other, forming a length $|\Gamma(v)| \cdot \log_2(1/\a)$ bit-string~$b^v$. Finally, we define $S_i$ by setting 
$v \in S_i$ if and only if $A^vi+b^v = 0$, 
where $0$ denotes the all-zeros vector. 
We say that $v$ survives the $i$-th repetition if 
$A^vi+b^v = 0$. Then $I_v= \{i : v \in S_i\}$ is the set of indices $I_v \subseteq[k]$ in which $v$ survives.

In a one-round distributed setting, the processors can effectively pre-compute the submatrices  $A'_u$ and the subvectors $b'_u$ using a shared seed. In particular, these may be computed on the fly, as opposed to stored up front, by using a shared random seed and by using an efficient hash function to compute the elements of $A'_u$ and  $b'_u$ only when processing $v$ such that $u \in \Gamma(v)$. By doing so, the processors will use the same values of $A'_u$ and  $b'_u$ as one another, leading to consistent survival sets, without incurring any extra rounds of communication. 

To gain intuition about this filtering procedure, let $d = |\Gamma(v)|$ denote the number of $v$'s neighbors. Node $v$ will survive in $S_i$ if $i$ satisfies  $A^vi+b^v = 0$. This consists of $d \cdot \log_2(1/\a)$ linear equations that $i$ must satisfy. As the matrix $A^v$ and the vector $b^v$ are chosen uniformly at random, it is easy to check that $v$ survives in $S_i$ with probability $\a^{|\Gamma(v)|} = \a^d$, and hence,  their expected sizes satisfy
$$\mathbb{E}[|S_i|] = \a^d N \qquad \mbox{and} \qquad \mathbb{E}[|I_v|] = \a^d k$$ 
over a random $A^v$ and $b^v$. 

Theoretically, the main appeal of \filname{} is that it is pairwise independent in the following sense. For any two distinct repetitions $i$ and $i'$, the bit-strings for $i$ and $i'$ differ in at least one bit. Therefore, we see that $A^vi+b^v = 0$ is satisfied or not independently of $A^vi'+b^v = 0$, over the random choice of $A^v$ and $b^v$. While this is only true for pairs of repetitions, this level of independence will suffice for our theoretical analysis. Furthermore, we show that we can determine the survival sets containing $v$ in time proportional to the number $|I_v|$ of such sets, which is often much less than the total number $k$ of possible sets.

We now explain how to efficiently compute the survival sets on a per-node basis. For a fixed node $v \in [N]$, the \filname{} method determines the repetitions $i$ that $v$ survives in, or in other words, the set $I_v = \{i : v \in S_i\}$. This is equivalent to finding all length $\log_2(k)$ bit-strings $i$ that are solutions to $A^vi+b^v = 0$.
The processor can form $A^v$ and $b^v$ in $O(d)$ time, where $d = |\Gamma(v)|$,
assuming the unit cost RAM model on words of $O(\log_2(N))$ bits. 
Then, we can use Gaussian elimination over bit-strings to very quickly find all $i \in [k]$ that satisfy $A^v i + b^v = 0$. 
To understand the complexity of this, first note that $A^v$ has $\log_2 k$ columns. Moreover, without loss of generality, we see that $A^v$ has at most $\log_2 k$ rows, as otherwise there exists no solution. Therefore, 
Gaussian elimination takes $O(\log^3 k)$ time to write $A^v$ in upper triangular form (and correspondingly rewrite $b^v$) so that all solutions to $A^ui = b^u$ can be enumerated in time proportional to the number of solutions to this equation. The expected total work is $$O(N \log^3 k + \a^dkN).$$ This can be parallelized for each node $v$ independently.

We prove guarantees about \filname{} in Theorem~\ref{thm:work}. 
The pseudo-code for \filname{} appears as Algorithm~\ref{alg:filter}. 
The main difference between the two filtering methods is how the random survival sets are chosen. For the sake of this discussion, we set $k = N$, which is reasonable in practice, and we continue to let $d = |\Gamma(v)|$. In the \filname{} method, we use a random linear map
over $GF(2)$ with enough independent randomness to decide for each repetition, whether or not a node survives not. By using Gaussian elimination, we are able to compute $I_v = \{i: v \in S_i\}$ in time proportional to $|I_v| \leq N$. In particular, the amount of work for $v$ is $O(\log^3 N+ \a^dN)$ in expectation, because $\mathbb{E}[|I_v|] = \a^dN$ when $k=N$. 

	

The pseudo-code for \algname{} appears as Algorithm~\ref{alg:main}.
We assume that the vertices $v$ start partitioned arbitrarily across $p$ processors. For each vertex $v$ in parallel, we use \filname{} determine the indices $I_v$  of the sets in which $v$ survives. As detailed above, we can do so consistently by using a shared random seed for \filname{}. During the communication phase, we randomly distribute the sets $S_1,\ldots, S_k$ across $p$ processors, so that each processor handles $k/p$ sets in expectation. Then, during local computation, we compare all pairs in $S_i$ for each $i \in [k]$ in parallel. 
We use $O(\log N)$ independent iterations of the algorithm in parallel to find all close pairs with high probability (e.g., recall close to one).
Finally, we output all pairs with cosine similarity at least $\tau$ in a distributed fashion.  



One way of processing each $S_i$ set is to compare all pairs in this set. Specifically, for all pairs of nodes $v,v' \in S_i$, explicitly compute $|\Gamma(v) \cap \Gamma(v')|$ and check if it is at least $\tau \sqrt{|\Gamma(v)|\cdot|\Gamma(v')|}$. One can assume the lists $\Gamma(v)$ and $\Gamma(v')$ are sorted arrays of
$d'$ integers, where $d' = \max\{|\Gamma(v)|, |\Gamma(v')|\}$. Thus, one can compute $|\Gamma(u) \cap \Gamma(v)|$ by merging these sorted lists in $O(d')$ time,
assuming words of length $O(\log_2(N))$ can be manipulated in constant time in the unit cost RAM model. 

Letting $d_i$ be the maximum of $|\Gamma(v)|$ over $v \in S_i$, the time to locally compare all pairs in set $S_i$ is $O(|S_i|^2 d_i)$. We can also bound the average amount of work across $p$ processors to handle
all sets $S_1,\ldots, S_k$. This can be bounded by $$O\left(\sum_{i=1}^k |S_i|^2 \cdot d_i \cdot \frac{k}{p}\right).$$ 
We call this the {\it brute-force all-pairs} algorithm. 

\subsubsection{Setting the Parameters}
Let $\bar{d}$ denote the average degree on the right in the input graph. Ideally, these parameters should satisfy
\begin{eqnarray}\label{eqn:key}
\a^{(2-\tau)\bar{d}} \cdot k = 2,
\end{eqnarray} 
or in other words, $\a = (2/k)^{1/((2-\tau)\bar{d})}$, where 2 could be replaced with a larger constant for improved recall. If it is possible to approximately satisfy (\ref{eqn:key}) with $\log_2(1/\a)$ being an integer, then running $O(\log N)$ independent iterations of the algorithm with these parameters will work very well. For example, this is the case when $(1/2)^{\bar{d}} = 1/N^c$ for constant $c \approx 1$. However, for large average degree $\bar{d}$, the parameter $\a$ may exceed 1/2. To approximate $\a > 1/2$, we can subsample the matrices $A^v$ and vectors $b^v$ to increase the effective collision probability. More precisely, consider $d = |\Gamma(v)|$. If we wish to survive in a repetition with probability $\a^d$, then we can solve for $d^*$ in the equality $\a^d = (1/2)^{d^*}$, and we subsample the $d$ rows in $A^v$ and $b^v$ down to $d^*$. This effectively constructs survival sets $S_i$ as in Naive-Filter with $\a$ probability of each neighbor surviving. In the theoretical results, we will assume that $\a$ and $k$ satisfy (\ref{eqn:key}). In the experiments, we either set $\a$ to be 1/2, or we use the matrix subsampling approach; we also vary the number of independent iterations to improve recall (where we use $\beta$ to denote the number of iterations).

\section{Theoretical Guarantees}
\label{sec:theory}

We assume on the graph $G = (U,V,E)$ is right-regular with nodes in $V$ having degree $d$ for simplicity. In practice, we can repeat the algorithm for different small ranges of $d$. First, notice that
\begin{eqnarray}\label{eqn:survive}
\Pr[v \in S_i] = \Pr[A^vi + b^v = 0] = \frac{1}{2^{d \log_2 1/\a}} = \a^d
\end{eqnarray}
Now consider two nodes $u, v \in [N]$. Then both $u$ and $v$ are in $S_i$ if and only if
the following event occurs. Let $A^{u,v}$ be the matrix obtained by stacking $A^u$ on top
of $A^v$, and $b^{u,v}$ be the vector obtained by stacking $b^u$ on top of $b^v$. Note that
for each $w \in \Gamma(u) \cap \Gamma(v)$, the rows of $A_w$ occur {\it twice} in $A^{u,v}$ and the
coordinates of $b_w$ occur {\it twice} in $b^{u,v}$. Thus, it suffices to retain
only one copy of $A_w$ and $b_w$ in $A^{u,v}$ for each $w \in \Gamma(u) \cap \Gamma(v)$, and by doing
so we reduce the number of rows of $A^{u,v}$ and entries of $b^{u,v}$ to at most 
$|\Gamma(u) \cup \Gamma(v)| \cdot d \log_2 1/\a$. Consequently, 
%
\begin{eqnarray}\label{eqn:pairSurvive}
\Pr[u \in S_i \textrm{ and } v \in S_i]  = \Pr[A^{u,v} i + b^{u,v} = 0] =
 \a^{|\Gamma(u) \cup \Gamma(v)|}
\end{eqnarray}
Notice that on one extreme if $\Gamma(u)$ and $\Gamma(v)$ are disjoint, then (\ref{eqn:pairSurvive}) evaluates to $\a^{2d}$. On the other hand, if $|\Gamma(u) \cap \Gamma(v)| \geq \tau \cdot d$, then $|\Gamma(u) \cup \Gamma(v)| \leq (2-\tau)d$, and then (\ref{eqn:pairSurvive}) evaluates to $\a^{(2-\tau)d}$. \\

The discrepancy in (\ref{eqn:survive}) and (\ref{eqn:pairSurvive}) is exactly what we exploit in our LSF scheme; namely, we use the fact that similar pairs are more likely to survive together in a repetition than dissimilar pairs. 


We first justify the setting of $\alpha$ in (\ref{eqn:key}).

\begin{lemma}\label{lem:expected}
	Let $u,v$ be such that $|\Gamma(u) \cap \Gamma(v)| \geq \tau d$. The expected number of repetitions $i$ for which both $u \in S_i$ and $v \in S_i$ is at least $2$. 
\end{lemma}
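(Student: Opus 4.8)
The plan is to compute the expectation directly by linearity over the $k$ repetitions. First I would introduce, for each $i \in [k]$, the indicator random variable $X_i = \mathbf{1}[u \in S_i \text{ and } v \in S_i]$, so that the quantity of interest is $\mathbb{E}\left[\sum_{i=1}^k X_i\right]$. By linearity of expectation this equals $\sum_{i=1}^k \Pr[u \in S_i \text{ and } v \in S_i]$, and I would substitute the per-repetition collision probability already established in~(\ref{eqn:pairSurvive}), namely $\Pr[u \in S_i \text{ and } v \in S_i] = \a^{|\Gamma(u) \cup \Gamma(v)|}$. Since this probability does not depend on $i$, the sum reduces to $k \cdot \a^{|\Gamma(u) \cup \Gamma(v)|}$.

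Next I would translate the intersection hypothesis into a bound on the union size. Because the graph is right-regular with $|\Gamma(u)| = |\Gamma(v)| = d$, inclusion-exclusion gives $|\Gamma(u) \cup \Gamma(v)| = 2d - |\Gamma(u) \cap \Gamma(v)|$, and the assumption $|\Gamma(u) \cap \Gamma(v)| \geq \tau d$ then yields $|\Gamma(u) \cup \Gamma(v)| \leq (2-\tau)d$. The one point that needs a moment of care is the direction of the resulting inequality on $\a^{(\cdot)}$: since $\a \in (0,1/2]$ is strictly less than one, the map $x \mapsto \a^x$ is decreasing, so a smaller exponent gives a larger value. Hence $\a^{|\Gamma(u) \cup \Gamma(v)|} \geq \a^{(2-\tau)d}$, and therefore $k \cdot \a^{|\Gamma(u) \cup \Gamma(v)|} \geq k \cdot \a^{(2-\tau)d}$.

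Finally I would invoke the parameter setting. In the right-regular case the average degree $\bar{d}$ coincides with $d$, so equation~(\ref{eqn:key}) reads $\a^{(2-\tau)d} \cdot k = 2$, which gives $k \cdot \a^{(2-\tau)d} = 2$ and closes the chain at exactly~$2$.

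There is no genuine obstacle here; the statement is essentially a restatement of how $\a$ and $k$ were chosen in~(\ref{eqn:key}). The only thing to watch is keeping the monotonicity of $x \mapsto \a^x$ pointed in the right direction when passing from the union-size bound to the survival-probability bound, and noting that the ``at least'' in the conclusion comes precisely from this monotonicity step: the expectation equals exactly $2$ in the boundary case $|\Gamma(u) \cap \Gamma(v)| = \tau d$, and the same computation automatically delivers a strictly larger value for super-threshold pairs.
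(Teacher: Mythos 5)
Your proposal is correct and follows essentially the same route as the paper's proof: linearity of expectation over the $k$ repetitions, the per-repetition collision probability from~(\ref{eqn:pairSurvive}), the bound $|\Gamma(u)\cup\Gamma(v)|\leq(2-\tau)d$ together with the monotonicity of $x\mapsto\a^x$, and finally the parameter choice~(\ref{eqn:key}). The only difference is that you spell out the intermediate steps (the indicator decomposition and the direction of the monotonicity) that the paper leaves implicit.
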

\begin{proof}
	As shown in (\ref{eqn:pairSurvive}), the probability both $u$ and $v$ survive in a single repetition is 
	$\a^{|\Gamma(u) \cup \Gamma(v)|} \geq \a^{(2-\tau)d}$, and therefore the expected number of repetitions
	for which both $u \in S_i$ and $v \in S_i$ is at least $k \cdot \a^{(2-\tau)d}$, which by (\ref{eqn:key}) is
	at least $2$. 
\end{proof}

\begin{lemma}\label{lem:expected-comm}
	The expected load per processor is $\a^dNk/p$, and the expected total communication is $\a^dkN$. 
\end{lemma}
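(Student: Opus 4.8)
The plan is to compute both quantities directly from the survival probabilities established in equations~(\ref{eqn:survive}) and~(\ref{eqn:pairSurvive}), using linearity of expectation. Since the graph is assumed right-regular with every node having degree $d$, the probability that any fixed node $v$ survives in a fixed repetition is $\a^d$ by~(\ref{eqn:survive}), and this value is uniform across all nodes.

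First I would compute the expected total communication. The total communication is governed by the total size of all survival sets, since each $S_i$ must be distributed in its entirety. For a fixed repetition $i$, by linearity of expectation over the $N$ nodes, $\mathbb{E}[|S_i|] = \sum_{v \in V} \Pr[v \in S_i] = N \cdot \a^d = \a^d N$. Summing over all $k$ repetitions gives $\mathbb{E}\!\left[\sum_{i=1}^k |S_i|\right] = k \cdot \a^d N = \a^d k N$, which is the claimed total communication.

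Next I would compute the expected load per processor. Since the $k$ survival sets are distributed across the $p$ processors so that each processor handles roughly $k/p$ repetitions, the expected load on a single processor is the expected total communication divided by $p$, namely $\a^d N k / p$. More carefully, one can argue this by noting that each processor is assigned $k/p$ repetitions in expectation, and each such set contributes $\a^d N$ in expectation to that processor's load; by linearity of expectation the product $\a^d N \cdot (k/p)$ follows.

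I do not anticipate a serious obstacle here, as the statement is essentially a direct application of linearity of expectation to the per-repetition survival probability $\a^d$. The only subtlety worth being careful about is the distinction between the expected load and the \emph{maximum} load across processors; the lemma asks only for the expected load, so concentration arguments (which would be needed for a high-probability bound on the maximum load, as alluded to in the algorithm overview) are not required here. Linearity of expectation applies regardless of the dependencies between survival events, so I need not invoke the pairwise independence structure of \filname{} for this particular claim.
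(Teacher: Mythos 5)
Your proof is correct and follows essentially the same route as the paper's: compute $\mathbb{E}[|S_i|] = \a^d N$ from the per-node survival probability, sum over the $k$ repetitions for the total communication, and divide among the $p$ processors for the per-processor load. Your added remarks that linearity of expectation does not require independence across survival events, and that independence of the set-to-processor assignment from the set contents justifies the product $\a^d N \cdot (k/p)$, are slightly more careful than the paper's one-line argument but do not change the approach.
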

\begin{proof}
There are $k$ repetitions, each concerning one $S_i$ survival set. Each node $v \in [N]$ survives in $S_i$ with probability $\a^d$ independently. The expected size of $S_i$ is ${\bf E} |S_i| = \a^dN$. Each processor handles $k/N$ repetitions, leading to $\a^dNk/p$ expected load. The total communication is $\sum_{i=1}^k |S_i|$, which has expectation $\a^dkN.$
\end{proof}

\begin{lemma}\label{lem:expected-work}
	Using brute-force all-pairs locally, the expected work per machine is $(\a^{d}N)^2 k/p$.
\end{lemma}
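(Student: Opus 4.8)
The plan is to estimate the expected local work of the brute-force all-pairs step by combining the per-set cost bound $O(|S_i|^2 d_i)$ derived in the algorithm description with the distribution of the $k/p$ repetitions assigned to a given machine. Since the graph is assumed right-regular with all degrees equal to $d$, we have $d_i = d$ uniformly, so the cost of processing a single survival set $S_i$ is $O(|S_i|^2 d)$. The total expected work over all $k$ sets is therefore proportional to $d \sum_{i=1}^k \mathbb{E}[|S_i|^2]$, and dividing by $p$ (each machine handles a $1/p$ fraction of the sets) gives the per-machine figure. The statement suppresses the factor of $d$, so I will track the dominant combinatorial term $\mathbb{E}[|S_i|^2]$ and report $(\a^d N)^2 k/p$ as the leading-order expression.

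The key computation is $\mathbb{E}[|S_i|^2]$ for a fixed $i$. Writing $|S_i| = \sum_{v \in [N]} \mathbf{1}[v \in S_i]$, I would expand the square as $\sum_{u,v} \Pr[u \in S_i \text{ and } v \in S_i]$. By equation~(\ref{eqn:pairSurvive}), each off-diagonal pair contributes $\a^{|\Gamma(u) \cup \Gamma(v)|}$, which lies between $\a^{(2-\tau)d}$ and $\a^{2d}$ depending on the overlap, while the $N$ diagonal terms each contribute $\a^d$ by~(\ref{eqn:survive}). For the purpose of the stated expectation, the dominant contribution is the $\approx N^2$ off-diagonal terms for dissimilar pairs, each contributing $\a^{2d}$, which sum to $N^2 \a^{2d} = (\a^d N)^2$. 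First I would note that the diagonal term $N\a^d$ and the slightly larger contributions from the (few) similar pairs are lower-order relative to $(\a^d N)^2$, since by~(\ref{eqn:key}) we have $\a^d N$ large. This yields $\mathbb{E}[|S_i|^2] = \Theta((\a^d N)^2)$.

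Summing over the $k$ repetitions and dividing by $p$ then gives the claimed per-machine expected work of $(\a^d N)^2 k/p$, matching the expression obtained by treating $|S_i|$ as concentrated at its mean $\mathbb{E}[|S_i|] = \a^d N$ from Lemma~\ref{lem:expected-comm} and squaring. I expect the main subtlety to be justifying that the second moment $\mathbb{E}[|S_i|^2]$ is genuinely of order $(\mathbb{E}[|S_i|])^2$ rather than being inflated by the variance: because survival of distinct nodes is only pairwise-correlated (through shared neighbors via~(\ref{eqn:pairSurvive})) and not independent, one must check that the correlated pairs contribute only lower-order terms. Under the right-regular assumption and the parameter setting~(\ref{eqn:key}), the number of pairs with large overlap is small enough that their slightly elevated collision probability $\a^{(2-\tau)d}$ does not dominate, so the bound holds to leading order.
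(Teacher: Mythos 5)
Your approach matches the paper's: the paper's proof is a two-line computation that takes the expected size $\alpha^d N$ of a repetition, squares it to get the per-set work, and multiplies by the $k/p$ sets a processor handles. You arrive at the same expression by the same route, just with more care at the squaring step.

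That extra care is worth a comment, because it surfaces something the paper glosses over. The expected brute-force cost of one set is governed by $\mathbb{E}[|S_i|^2]$, not $(\mathbb{E}[|S_i|])^2$, and your expansion via~(\ref{eqn:pairSurvive}) correctly shows these differ by the contribution $\sum_{u\neq v}\alpha^{|\Gamma(u)\cup\Gamma(v)|}$, which ranges from $N^2\alpha^{2d}$ (all pairs disjoint) up to roughly $N^2\alpha^{(2-\tau)d}$ (many highly overlapping pairs). Your closing claim --- that the similar pairs are few enough for their elevated survival probability not to dominate --- is a genuinely data-dependent assumption: it does not follow from right-regularity or from~(\ref{eqn:key}), and for a dataset in which a constant fraction of pairs overlap heavily, $\mathbb{E}[|S_i|^2]$ really is inflated above $(\alpha^d N)^2$. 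The paper implicitly reads the lemma as a statement about $(\mathbb{E}[|S_i|])^2$ and defers the honest second-moment accounting to the later profile bound $\mathbb{E}[|S_i|^2]\leq \Phi = N\alpha^d + \sum_{u\neq v}\alpha^{|\Gamma(u)\cup\Gamma(v)|}$. So your proof is sound up to that last step; to make it airtight you should either state the assumption on the number of high-overlap pairs explicitly, or retreat to the weaker (but unconditional) reading that the paper itself uses.
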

\begin{proof}
	Each repetition has expected size $\a^dN$, leading to work $\a^{2d}N^2$. Each processor handles $k/p$ repetitions, implying $\a^{2d}N^2 k/p$ work per processor in expectation.
\end{proof}

Combining the lemmas and plugging in $\a$ gives us the following.

\begin{theorem}\label{thm:comm-work}
	Setting $\a = (2/k)^{1/((2-\tau)d)}$, the survival procedure has total communication is $$O(Nk^{1-1/(2-\tau)})$$ and local work 
	$$O(N^2 k^{1-2/(2-\tau)}/p)$$ in expectation.
\end{theorem}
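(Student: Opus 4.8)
The plan is to substitute the prescribed value of $\a$ into the expected-cost expressions already established in Lemmas~\ref{lem:expected-comm} and~\ref{lem:expected-work}, and then simplify the resulting powers of $k$. Concretely, Lemma~\ref{lem:expected-comm} gives expected total communication $\a^d k N$, and Lemma~\ref{lem:expected-work} gives expected work per machine $(\a^d N)^2 k/p = \a^{2d} N^2 k / p$. So the entire proof reduces to computing $\a^d$ and $\a^{2d}$ when $\a = (2/k)^{1/((2-\tau)d)}$.

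First I would compute $\a^d = (2/k)^{d/((2-\tau)d)} = (2/k)^{1/(2-\tau)}$. Treating the constant factor $2^{1/(2-\tau)}$ as an $O(1)$ term (since $\tau \in (0,1]$ keeps the exponent bounded), this is $\Theta(k^{-1/(2-\tau)})$. Substituting into the communication bound $\a^d k N$ yields $N \cdot k \cdot k^{-1/(2-\tau)} = N k^{\,1 - 1/(2-\tau)}$, matching the claimed communication. Similarly, $\a^{2d} = (2/k)^{2/(2-\tau)} = \Theta(k^{-2/(2-\tau)})$, and substituting into $\a^{2d} N^2 k / p$ gives $N^2 k^{\,1 - 2/(2-\tau)}/p$, matching the claimed work. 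The whole argument is a one-line exponent calculation for each of the two bounds, so I would present it as two short displayed computations.

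There is essentially no main obstacle here, since the substantive probabilistic work — computing $\Pr[u \in S_i \text{ and } v \in S_i]$ via the pairwise-independence of the $GF(2)$ linear map, and taking expectations over the $k$ repetitions and $p$ processors — is already discharged in the preceding lemmas. The only minor subtlety worth flagging is the treatment of the constant $2$ inside $(2/k)^{1/((2-\tau)d)}$: I would note explicitly that because $\tau \le 1$ (so $2-\tau \ge 1$), the factor $2^{1/(2-\tau)}$ lies in $[1,2]$ and is therefore absorbed into the $O(\cdot)$ notation, which is why the final bounds carry no dependence on the constant in~(\ref{eqn:key}). I would close by remarking that this is also why replacing $2$ with a larger constant for improved recall only changes these bounds by a constant factor.
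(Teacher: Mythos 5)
Your proposal is correct and is exactly the paper's argument: the paper gives no separate proof beyond the remark ``Combining the lemmas and plugging in $\a$,'' which is precisely the substitution of $\a^d = (2/k)^{1/(2-\tau)}$ into Lemma~\ref{lem:expected-comm} and of $\a^{2d}$ into Lemma~\ref{lem:expected-work} that you carry out. Your additional observation that the constant $2$ in~(\ref{eqn:key}) contributes only a bounded factor $2^{1/(2-\tau)} \in [1,2]$ is a harmless and accurate refinement.
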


As an example, we compare to hash-join when $p =N$, which has total communication $N^{3/2}$ and local work $N$. We set $k = N^{\frac{2-\tau}{2-2\tau}}$, and by Theorem~\ref{thm:comm-work}, the expected total communication is $Nk^{-\tau/(2-2\tau)} = N^{3/2}$. The local work per processor  is $Nk^{-\tau/(2-\tau)} = N^{1-\frac{\tau}{2-2\tau}}$. Since $\tau > 0$, the work is always sublinear, thus improving over hash-join while using the same amount of total communication.
As we will see in the theorem below,  it is crucial that we use
the family of pairwise independent hash functions above for generating our randomness. 
\begin{theorem}\label{thm:work}
	The expected total time the nodes in $[N]$ need to generate the $S_i$
	is $$O(N \log^3 k + \a^dkN + |E|),$$ and the expected total time and communication
	that the nodes in $[N]$ need to send the sets $\Gamma(v)$ for each $v \in S_i$ for each $i$
	is $$O(N \log^3 k + \a^dkN \cdot d \log N + |E|).$$
\end{theorem}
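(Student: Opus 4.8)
The plan is to decompose the claimed runtime into its three additive terms and account for each separately, summing over all nodes $v \in [N]$. First I would handle the time to \emph{generate} the survival sets. For a single node $v$ with degree $d = |\Gamma(v)|$, forming the stacked matrix $A^v$ and vector $b^v$ costs $O(d)$ work (reading the $d$ neighbors and computing their $A'_u, b'_u$ blocks via the shared hash function); running Gaussian elimination to upper-triangular form costs $O(\log^3 k)$ as argued in the text, since $A^v$ has $\log_2 k$ columns and we may assume at most $\log_2 k$ rows. After triangularization, enumerating the solution space $I_v = \{i : A^v i + b^v = 0\}$ takes time proportional to $|I_v|$. Summing the $O(d)$ preprocessing over all $v$ gives $\sum_v |\Gamma(v)| = |E|$; summing the $O(\log^3 k)$ elimination cost gives $O(N \log^3 k)$; and summing the enumeration cost gives $\sum_v |I_v|$, whose expectation is $\sum_v \a^{|\Gamma(v)|} k = \a^d k N$ in the right-regular case by the calculation $\mathbb{E}[|I_v|] = \a^d k$ from the excerpt. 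Adding these three contributions yields the first bound $O(N \log^3 k + \a^d k N + |E|)$.

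For the second bound, the extra factor $d \log N$ comes from the \emph{communication} of the neighbor lists themselves. Once each $v$ knows its index set $I_v$, it must transmit its set $\Gamma(v)$ to the processor handling each repetition $i \in I_v$ in which it survives. Each such message encodes $d$ neighbor identifiers, each of size $O(\log N)$ bits (or $O(1)$ words in the unit-cost RAM model on $O(\log N)$-bit words), so a single message costs $O(d \log N)$. The total over all survivals is $\sum_v |I_v| \cdot d \log N$, and taking expectations replaces $\sum_v |I_v|$ by its mean $\a^d k N$, giving the communication term $O(\a^d k N \cdot d \log N)$. The generation cost $O(N \log^3 k + |E|)$ is incurred as before, so the two combine to $O(N \log^3 k + \a^d k N \cdot d \log N + |E|)$.

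The main subtlety — and where I would be most careful — is that both bounds are stated in expectation over the random choice of the matrices $A^v$ and vectors $b^v$, so I must justify that the relevant sums of $|I_v|$ concentrate at their means. Here I would appeal to linearity of expectation, which suffices for the \emph{expected} total quantities claimed and requires no independence between distinct nodes: $\mathbb{E}\bigl[\sum_v |I_v|\bigr] = \sum_v \mathbb{E}[|I_v|] = N \a^d k$ regardless of correlations across nodes. The pairwise independence of the \filname{} scheme (the fact that $A^v i + b^v = 0$ and $A^v i' + b^v = 0$ are independent events for $i \neq i'$) is what underwrites the per-node calculation $\mathbb{E}[|I_v|] = \sum_{i \in [k]} \Pr[v \in S_i] = \a^d k$, and I would state explicitly that this is the only place independence is needed for the expectation bounds. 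Finally I would note that assuming $\a^d = 2/(k \cdot \a^{(1-\tau)d})$-type scaling from (\ref{eqn:key}) keeps $\a^d k N$ at a manageable size, but that the theorem as stated holds for any valid $\a$, so no such assumption is required for correctness of these two bounds.
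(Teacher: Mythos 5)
Your proposal is correct and follows essentially the same route as the paper's proof: $O(d)$ per node to form $A^v,b^v$, $O(\log^3 k)$ for Gaussian elimination, enumeration of solutions in time proportional to $|I_v|$ with $\mathbb{E}[|I_v|]=\a^d k$, and an extra $d\log N$ factor per surviving copy for transmitting $\Gamma(v)$. You are in fact slightly more explicit than the paper about where the $|E|$ term comes from and about the fact that only linearity of expectation (not independence across repetitions) is needed for the expected-cost bounds; the only nitpick is that even the per-node identity $\mathbb{E}[|I_v|]=\sum_i \Pr[v\in S_i]$ needs no pairwise independence, only the uniformity of $A^v,b^v$.
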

\begin{proof}
	Each node $u \in [N]$ needs
	to figure out the repetitions $i$ that it survives in. It can form $A^u$ and $b^u$ in $O(d)$ time
	assuming the unit cost RAM model on word of $O(\log_2(N))$ bits. Note $u$ then needs to figure out
	which $i \in [N]$ satisfy $A^u \cdot i + b^u = 0$. To do so, in can just solve this equation using
	Gaussian elimination. Note that $A^u$ has at most $\log_2 k$ rows, and has $\log_2 k$ columns. Therefore 
	Gaussian elimination takes at most $O(\log^3 k)$ time to write $A^u$ in upper triangular form and corresponding $b^u$ so that all solutions to the equation $A^ux = b^u$ can be enumerated in time proportional to the number of solutions to this equation. Thus, the expected time per processor
	is $O(\log^3 N + \a^dN)$, where we have used (\ref{eqn:survive}) to bound the expected number of repetitions that $u$ survives in by $k \cdot \a^d$. Thus, the total expected time to form all of the $S_i$, for $i = 1, 2, \ldots, k$, is $O(N \log^3 k + \a^dkN)$. Note that $O(\a^dkN \cdot d \log N)$ is the total expected amount of communication.
\end{proof}


While correct in expectation, since the randomness uses across the repetitions is not independent, namely,
we use the same matrices $A_w$ and vectors $b_w$ for each node $w \in [M]$, it is important to show that
the variance of the number of repetitions $i$ for which both $u \in S_i$ and $v \in S_i$ is small. This enables
one to show the probability there is at least one repetition $i$ for which both $u$ and $v$ survive is a large
enough constant, which can be amplified to any larger constant by independently repeating a constant number
of times. 

\begin{lemma}\label{lem:variance}
	Let $u,v$ be such that $|\Gamma(u) \cap \Gamma(v)| \geq \tau d$. 
	With probability at least $1/2$, there is a repetition $i$ with both $u \in S_i$ and $v \in S_i$.
\end{lemma}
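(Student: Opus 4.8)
The plan is to apply the second moment method. For the fixed pair $u,v$, let $X_i$ be the indicator of the event $\{u \in S_i \text{ and } v \in S_i\}$ and set $X = \sum_{i=1}^k X_i$, so that $X$ counts the repetitions in which both nodes survive and the claim is exactly that $\Pr[X \geq 1] \geq 1/2$. From Lemma~\ref{lem:expected} we already have $\mathbb{E}[X] \geq 2$. Since $X$ is a nonnegative integer random variable, Chebyshev's inequality gives $\Pr[X = 0] \leq \operatorname{Var}(X)/(\mathbb{E}[X])^2$, so it suffices to show $\operatorname{Var}(X) \leq \mathbb{E}[X]$; then $\Pr[X=0] \leq 1/\mathbb{E}[X] \leq 1/2$ and we are done.

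The heart of the argument, and what I expect to be the main obstacle, is controlling $\operatorname{Var}(X) = \sum_i \operatorname{Var}(X_i) + \sum_{i \neq i'} \operatorname{Cov}(X_i, X_{i'})$. Each diagonal term is harmless: $X_i$ is an indicator, so $\operatorname{Var}(X_i) \leq \mathbb{E}[X_i]$ and $\sum_i \mathbb{E}[X_i] = \mathbb{E}[X]$. Everything therefore reduces to showing the covariances vanish, i.e.\ that $X_i$ and $X_{i'}$ are independent for distinct repetitions $i \neq i'$. This is the pair-of-nodes analogue of the single-node pairwise independence highlighted before Theorem~\ref{thm:work}, and I would prove it directly over $GF(2)$ using the reduced stacked system from~(\ref{eqn:pairSurvive}). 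Writing $A = A^{u,v}$ and $b = b^{u,v}$ after deleting the duplicated rows, $A$ is a uniform $r \times \log_2 k$ binary matrix and $b$ a uniform length-$r$ vector (independent of $A$), where $r = |\Gamma(u) \cup \Gamma(v)| \cdot \log_2(1/\a)$, and $X_i = 1$ means $Ai + b = 0$.

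The computation I would carry out row by row. For a single row $(a,\beta)$ with $a$ uniform in $GF(2)^{\log_2 k}$ and $\beta$ a uniform independent bit, both row-constraints $a \cdot i + \beta = 0$ and $a \cdot i' + \beta = 0$ hold iff $a \cdot (i \oplus i') = 0$ and $\beta = a \cdot i$. Because $i \neq i'$ forces $i \oplus i' \neq 0$, the bit $a \cdot (i \oplus i')$ is uniform, contributing a factor $1/2$, and then $\beta = a \cdot i$ contributes another factor $1/2$, so each row fires with probability $1/4$. Multiplying over the $r$ independent rows gives $\Pr[X_i = X_{i'} = 1] = (1/4)^r = \big((1/2)^r\big)^2 = \mathbb{E}[X_i]\,\mathbb{E}[X_{i'}]$, since $\Pr[X_i=1] = (1/2)^r = \a^{|\Gamma(u)\cup\Gamma(v)|}$. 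Hence $\operatorname{Cov}(X_i,X_{i'}) = 0$, and the pairwise independence is exactly the place where reusing the same matrices $A_w$ and vectors $b_w$ across repetitions is tolerable. With all cross-covariances zero we get $\operatorname{Var}(X) = \sum_i \operatorname{Var}(X_i) \leq \mathbb{E}[X]$, which feeds into Chebyshev as above to yield $\Pr[X \geq 1] \geq 1/2$. The only delicate point is the $GF(2)$ independence step; once $i \oplus i' \neq 0$ is used, the rest is routine.
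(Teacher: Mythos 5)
Your proof is correct and follows essentially the same route as the paper's: the second moment method with Chebyshev's inequality, using $\mathbb{E}[X]\geq 2$ from Lemma~\ref{lem:expected} and pairwise independence of the $X_i$ to get $\operatorname{Var}(X)\leq \mathbb{E}[X]$. The only difference is that you explicitly verify the pairwise independence of the repetitions by the row-by-row $GF(2)$ computation (using $i\oplus i'\neq 0$), whereas the paper simply invokes the well-known fact that $f(x)=Ax+b \bmod 2$ is a pairwise independent family; your verification is a correct and welcome elaboration, not a different argument.
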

\begin{proof}
	Let $X_i$ be an indicator random variable which is $1$ if $u$ and $v$ survive the $i$-th repetition, and is
	$0$ otherwise. Let $X = \sum_{i=1}^k X_i$ be the number of repetitions for which both $u$ and $v$ survive. 
	By Lemma \ref{lem:expected}, ${\bf E}[X] \geq 2$.
	It is well-known that the hash function family $f(x) = Ax + b \bmod 2$, where
	$A$ and $b$ range over all possible binary matrices and vectors, respectively, is a pairwise independent family.
	It follows that $X_1, X_2, \ldots, X_k$ are pairwise independent random variables, and consequently
	${\bf Var}[X] = \sum_{i=1}^k {\bf Var}[X_i]$. As $X_i \in \{0,1\}$, we have
	${\bf Var}[X_i] \leq {\bf E}[X_i]$, and hence, ${\bf Var}[X] \leq {\bf E}[X]$. 
	By Chebyshev's inequality,
	$$\Pr[X = 0] 
 	\ \leq\ \Pr\Big[|X - {\bf E}[X]| \geq {\bf E}[X]\Big]
\ \leq\ \frac{{\bf Var}[X]}{({\bf E}[X])^2} \ \leq\  \frac{1}{{\bf E}[X]} \ \leq\  \frac{1}{2}.$$
\end{proof}

\paragraph{Efficiently Amplifying Recall.} 
 At this point, we have shown that one iteration of \algname{} will find a constant fraction of close pairs.
To amplify the recall, we run $\beta = O(\log N)$ copies of \algname{} in parallel. We emphasize that this is a more efficient way to achieve a high probability result, better than simply increasing the number of repetitions $k$ in a single \algname{} execution. Intuitively, this is because the repetitions are only guaranteed to be pairwise independent. Theoretically, $O(\log(1/\delta))$ independent copies leads to a failure probability of $1 - \delta$ by a Chernoff bound. But, if we only increased the number of repetitions, then by Chebyshev's inequality, we would need to use $O(k/\delta)$ repetitions for the same success probability $1-\delta$. The latter requires $O(1/\delta)$ times the amount of communication/computation, while the former is only a $O(\log(1/\delta))$ factor.  Setting $\delta = 1/N^3$ leads to a failure probability of $1-1/N$ after taking a union bound over the $O(N^2)$ possible pairs. 

\section{Optimizations}
In this section, we present several extensions of the \algname{} algorithm and analysis, such as considering the number of close pairs, using hashing to reduce dimensionality, combining \algname{} with hash-join, and lowering the communication cost when the similarity graph is a matching.

\subsection{Processing Time as a Function of the Profile}
While Theorem \ref{thm:comm-work} gives us a worst-case tradeoff between computation and communication, we can better understand this tradeoff by parameterizing the total amount of work of the servers by a data-dependent quantity $\Phi$, introduced below, which may give a better overall running time in certain cases. 

Supposing that $k \geq p$, the processors receive multiple sets to process. We choose a random hash function $H : [k] \to [p]$ so that processor $j$ receives all sets $S_i$ for which $H(i) = j$. When $k \geq p$, each processor handles $k/p$ sets $S_i$ in expectation.

The processor handling the set $S_i$ receives $S_i$ together with the neighborhood $\Gamma(u)$ for each $u \in S_i$, 
and is responsible for outputting all pairs $u,v \in S_i$ for 
which $|\Gamma(u) \cap \Gamma(v)| \geq \tau d$. 

To bound the total amount of computation, we introduce a data-dependent quantity $\Phi$. Note that
the $S_i$ are independent and identically distributed, so we can fix a particular $i$. 
We define the {\it profile} $\Phi$ of a dataset as follows:
$$\Phi =  N \cdot \a^d + \sum_{u \neq v \in [N]} \a^{|\Gamma(u) \cup \Gamma(v)|}.$$

\begin{lemma}
	${\bf E}[|S_i|] = N \cdot \a^d$ and 
	${\bf E}[|S_i|^2] \leq \Phi$. 
\end{lemma}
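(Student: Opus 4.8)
The plan is to compute the two moments of $|S_i|$ directly by writing $|S_i|$ as a sum of indicator random variables and using linearity of expectation. Define $X_v = \mathbf{1}[v \in S_i]$ for each $v \in [N]$, so that $|S_i| = \sum_{v \in [N]} X_v$. Then $|S_i|^2 = \sum_{u,v \in [N]} X_u X_v$, and I split this into the diagonal terms $u = v$ and the off-diagonal terms $u \neq v$. This setup immediately separates into pieces that match the two summands defining $\Phi$.

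For the first moment, I would simply invoke equation~(\ref{eqn:survive}), which gives $\Pr[v \in S_i] = \a^d$ for each $v$. By linearity, $\mathbf{E}[|S_i|] = \sum_{v} \Pr[v \in S_i] = N \cdot \a^d$, establishing the first claim exactly. For the second moment, the diagonal contributes $\sum_v \mathbf{E}[X_v^2] = \sum_v \Pr[v \in S_i] = N \cdot \a^d$, since each $X_v$ is a $0/1$ indicator so $X_v^2 = X_v$. The off-diagonal terms contribute $\sum_{u \neq v} \mathbf{E}[X_u X_v] = \sum_{u \neq v} \Pr[u \in S_i \text{ and } v \in S_i]$, and here I would invoke equation~(\ref{eqn:pairSurvive}), which gives $\Pr[u \in S_i \text{ and } v \in S_i] = \a^{|\Gamma(u) \cup \Gamma(v)|}$. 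Summing the two contributions gives exactly
$$
\mathbf{E}[|S_i|^2] = N \cdot \a^d + \sum_{u \neq v \in [N]} \a^{|\Gamma(u) \cup \Gamma(v)|} = \Phi.
$$

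In fact this computation yields equality $\mathbf{E}[|S_i|^2] = \Phi$, which is stronger than the stated inequality $\mathbf{E}[|S_i|^2] \leq \Phi$; the inequality presumably leaves room for a version of $\Phi$ that upper-bounds the union-size sum (for instance if one later restricts the sum to similar pairs or bounds $|\Gamma(u) \cup \Gamma(v)| \leq 2d$). I would note this but prove the exact identity, which trivially implies the claimed inequality.

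This proof is essentially routine, so there is no real obstacle — the only subtlety worth flagging is that the pairwise survival probability in~(\ref{eqn:pairSurvive}) relies on the pairwise independence of the hash family $f(i) = A^{u,v} i + b^{u,v}$, and that the double-counting of shared neighbors $w \in \Gamma(u) \cap \Gamma(v)$ has already been handled in the derivation of~(\ref{eqn:pairSurvive}), where duplicate rows of $A_w$ were collapsed to reduce the effective number of constraints to $|\Gamma(u) \cup \Gamma(v)| \cdot \log_2(1/\a)$. I can therefore cite~(\ref{eqn:pairSurvive}) as a black box rather than re-deriving the union-size exponent. The entire argument is just linearity of expectation applied to the expansion of $|S_i|$ and $|S_i|^2$, combined with the two already-established per-element and per-pair survival probabilities.
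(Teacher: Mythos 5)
Your proof is correct and follows essentially the same route as the paper's: expand $|S_i|=\sum_v X_v$, apply linearity with (\ref{eqn:survive}) for the first moment, and split the second moment into diagonal and off-diagonal terms handled by (\ref{eqn:pairSurvive}). Your observation that the computation actually gives equality $\mathbf{E}[|S_i|^2]=\Phi$ is also right; the paper's ``$\leq$'' is just slack in the write-up.
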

\begin{proof}
	Let $|S_i| = \sum_u X_u$, where $X_u$ is an indicator that node $u$ survives repetition $i$. Then
	$|S_i| = \sum_u X_u$, and so ${\bf E}[|S_i|] = N \cdot \a^d$ by (\ref{eqn:survive}). 
	For the second moment, 
	$${\bf E}[|S_i|^2] = \sum_{u,v} {\bf E}[X_u X_v] \leq \sum_u {\bf E}[X_u^2] + \sum_{u \neq v}{\bf E}[X_u X_v].$$
	Plugging $\a^{|\Gamma(u)\cup \Gamma(v)|}$ from (\ref{eqn:pairSurvive}) for ${\bf E}[X_u X_v]$ and using the definition of $\Phi$ 
	proves the lemma. 
\end{proof}

We are interested in bounding the overall time for all nodes in $[p]$ to process the sets $S_i$. 
\begin{theorem}
	The total work of the nodes in $[p]$ to process the sets $S_1,\ldots, S_k$, assuming that we use the brute-force all-pairs algorithm 
	is $O(dk \Phi).$ The average work per processor is $O(\frac{dk}{p} \Phi).$
\end{theorem}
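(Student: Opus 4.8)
The plan is to bound the total work by summing the cost of the brute-force all-pairs procedure over all $k$ repetitions, relating each repetition's expected cost to the profile $\Phi$ via the preceding lemma. Recall that under the right-regular assumption every node has degree $d$, so the cost of comparing all pairs in a single set $S_i$ using the merge-based intersection is $O(|S_i|^2 \cdot d)$: there are $\binom{|S_i|}{2}$ pairs, and each intersection $|\Gamma(u) \cap \Gamma(v)|$ takes $O(d)$ time by merging the two sorted neighbor lists of length $d$. First I would fix a single repetition $i$ and take expectations, writing the expected work for that repetition as $O(d \cdot {\bf E}[|S_i|^2])$.

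Next I would invoke the immediately preceding lemma, which gives ${\bf E}[|S_i|^2] \leq \Phi$. This is the crucial substitution: it converts the second-moment bound into the data-dependent quantity $\Phi = N \cdot \a^d + \sum_{u \neq v} \a^{|\Gamma(u) \cup \Gamma(v)|}$. Hence the expected work for repetition $i$ is $O(d \Phi)$. Since the $S_i$ are identically distributed, each of the $k$ repetitions contributes the same expected $O(d\Phi)$, so by linearity of expectation the total expected work is $O(dk\Phi)$, establishing the first claim.

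For the per-processor bound, I would use the load-balancing setup described just before the statement: the hash function $H : [k] \to [p]$ distributes the $k$ sets across the $p$ processors so that each processor handles $k/p$ sets in expectation. Multiplying the per-repetition work $O(d\Phi)$ by the expected number $k/p$ of repetitions assigned to a given processor yields average work $O\big(\frac{dk}{p}\Phi\big)$ per processor, which is the second claim.

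The main obstacle is really just making the second-moment step airtight: the inequality ${\bf E}[|S_i|^2] \leq \Phi$ already absorbs all the probabilistic content (it accounts for both the diagonal term $\sum_u {\bf E}[X_u^2] = N\a^d$ and the pairwise survival probabilities $\a^{|\Gamma(u)\cup\Gamma(v)|}$ from equation~(\ref{eqn:pairSurvive})), so once that lemma is in hand the rest is a routine application of linearity of expectation. One subtlety worth stating explicitly is that the $O(|S_i|^2 d)$ per-set cost bound, and the subsequent expectation, rely on the right-regularity assumption giving a uniform $d_i = d$; without it one would carry a $d_i$ depending on the repetition and the argument would need the maximum degree inside each $S_i$.
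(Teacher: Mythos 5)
Your proposal is correct and takes essentially the same route as the paper: the paper's own proof is a one-line appeal to the preceding lemma, and the chain you spell out --- per-set cost $O(|S_i|^2 d)$, the substitution ${\bf E}[|S_i|^2] \le \Phi$, linearity over the $k$ identically distributed repetitions, and division by $p$ via the hash assignment --- is exactly the intended argument, just written out in full.
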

\begin{proof}
	After receiving the $S_i$, the total time for all processors to execute
	their {\it brute-force all-pairs} algorithm is $O(dk \Phi)$, which allows for
	outputting the similar pairs. The theorem follows. 
\end{proof}

\subsection{Hashing to Speed Up Processing}\label{sec:hash1}
Recall that the processor responsible for finding all similar pairs in $S_i$ receives the set $\Gamma(u)$ of neighbors of each node $u \in S_i$.
In the case when the neighborhoods are all of comparable size, sat size $d$, we can think of $\Gamma(u)$ as a vector $\chi_u \in \{0,1\}^{M}$ with exactly $d$ ones in it; here 
$\chi_u$ is the characteristic vector of the neighbors of $u$. We can first hash the vector
$\chi_u$ down to $s = d/(z \tau)$ dimensions, for a parameter $z > 0$. To do this, we
use the CountMin map \cite{cm05}, which can be viewed as a random matrix 
$S \in \{0,1\}^{s \times M}$ with a single non-zero per column, and this non-zero is chosen
uniformly at random and independently for each of the $M$ columns of $S$. We replace
$\chi_u$ with $S \cdot \chi_u$. If an entry of $S \cdot \chi_u$ is larger than $1$, we replace
it with $1$, and let the resulting vector be denoted $\gamma_u$, which is in $\{0,1\}^s$. 
Note that we can compute all of the $\gamma_u$ for a given repetition $i$
using $O(|S_i| d)$ time, assuming arithmetic operations on $O(\log M)$ bit words can be performed
in constant time. 

While $\langle \chi_u, \chi_v \rangle = |\Gamma(u) \cap \Gamma(v)|$ for two nodes
$u, v \in [N]$, it could be that $\langle \gamma_u, \gamma_v \rangle \neq |\Gamma(u) \cap \Gamma(v)|$. We
now quantify this.

\begin{lemma}\label{lem:hashing}
	For any two nodes $u,v \in [N]$, it holds that with probability at least $1-2/z$, 
	$$|\langle \gamma_u, \gamma_v \rangle - \langle \chi_u, \chi_v \rangle| \leq d\tau/2.$$
\end{lemma}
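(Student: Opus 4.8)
The plan is to reinterpret the CountMin map as a hash function and reduce the statement to a bound on the number of bucket collisions. Since $S$ has a single nonzero per column placed independently and uniformly at random, it is induced by a uniformly random hash $h : [M] \to [s]$, where $h(j)$ is the row of the nonzero in column $j$. After thresholding, the $\ell$-th coordinate of $\gamma_u$ is exactly the indicator that some neighbor of $u$ lands in bucket $\ell$, i.e. $(\gamma_u)_\ell = \mathbf{1}[\Gamma(u) \cap h^{-1}(\ell) \neq \emptyset]$, and likewise for $\gamma_v$. Hence $\langle \gamma_u, \gamma_v\rangle$ counts the buckets occupied by both neighborhoods, while $\langle \chi_u, \chi_v\rangle = |\Gamma(u) \cap \Gamma(v)|$. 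The lemma is therefore a statement about how hashing distorts this ``doubly-occupied bucket'' count.

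Next I would write $c = |\Gamma(u) \cap \Gamma(v)|$ and split the neighbors into the shared set $I = \Gamma(u) \cap \Gamma(v)$ and the private sets $A = \Gamma(u)\setminus\Gamma(v)$ and $B = \Gamma(v)\setminus\Gamma(u)$. A bucket is doubly occupied exactly when it receives an element of $I$, or receives both an $A$-element and a $B$-element. This gives the exact identity $\langle\gamma_u,\gamma_v\rangle - c = E_+ - E_-$, where $E_+ \geq 0$ is the number of $I$-free buckets hit by both $A$ and $B$ (a false-positive overcount) and $E_- \geq 0$ is the number of intersection elements lost to collisions inside $I$ (an undercount). The key reduction is that both error terms are dominated by a single collision statistic: if $Z = \sum_{a \in \Gamma(u)}\sum_{b\in\Gamma(v),\, b\neq a}\mathbf{1}[h(a)=h(b)]$ counts colliding pairs of neighbors, then $E_+ \leq Z$ and $E_- \leq Z$, since each spurious double-hit and each lost intersection element is witnessed by a colliding pair in $Z$. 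Consequently $|\langle\gamma_u,\gamma_v\rangle - c| \leq \max(E_+,E_-) \leq Z$.

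It then remains to bound $Z$ in expectation and apply Markov's inequality. Because $h$ is a uniformly random function, any fixed pair of distinct coordinates collides with probability $1/s$, and there are $|\Gamma(u)|\,|\Gamma(v)| - c = d^2 - c$ ordered pairs of distinct neighbors, so $\mathbb{E}[Z] = (d^2 - c)/s \leq d^2/s$. Substituting the chosen value of $s$ makes $\mathbb{E}[Z]$ of order $d\tau/z$, and Markov's inequality then yields $\Pr[Z \geq d\tau/2] \leq 2/z$, which proves the claim.

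I expect the main obstacle to be the thresholding step, which is precisely what separates this from the textbook CountMin inner-product estimate. Capping makes the estimate two-sided: it not only bounds the false-positive noise from $A$--$B$ collisions but also introduces the undercount $E_-$ coming from collisions within the shared set $I$, a term with no analogue in the linear sketch. The care is in observing that both effects are governed by the same pairwise collision probability $1/s$ and hence by the single quantity $Z$, so that one application of Markov's inequality, rather than a separate treatment of each side, suffices and yields the stated $2/z$ failure probability.
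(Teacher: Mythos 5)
Your proposal is correct in its core mechanism --- bound the expected number of hash collisions among the $2d$ relevant neighbors and apply Markov --- but it is genuinely more complete than the paper's own proof, which only argues one side of the inequality. The paper asserts that $\langle \gamma_u, \gamma_v \rangle \leq \langle \chi_u, \chi_v \rangle$ always holds and then bounds only the undercount caused by collisions \emph{within} $\Gamma(u) \cap \Gamma(v)$; but that asserted inequality is false in general (e.g., two disjoint singleton neighborhoods whose elements collide give $\langle \gamma_u,\gamma_v\rangle = 1 > 0$), precisely because of the overcount your $E_+$ term captures, namely buckets hit by both a private neighbor of $u$ and a private neighbor of $v$. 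Your decomposition $\langle\gamma_u,\gamma_v\rangle - c = E_+ - E_-$ with both terms dominated by the single collision statistic $Z$ repairs this gap while still needing only one application of Markov, so it proves the two-sided bound the lemma actually states. One caveat you share with the paper: with the section's stated choice $s = d/(z\tau)$, the pairwise collision probability is $1/s = z\tau/d$, so $\mathbb{E}[Z] \leq d^2/s = dz\tau$ (not $d\tau/z$ as you write), and Markov at threshold $d\tau/2$ yields failure probability $2z$ rather than $2/z$; the stated conclusion requires $s = dz/\tau$, i.e., the parameter $z$ belongs in the numerator. This is a typo in the setup rather than a flaw in your argument, but you should not claim that substituting the stated $s$ gives $\mathbb{E}[Z] = O(d\tau/z)$ without correcting it.
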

\begin{proof}
	Note that $\langle \gamma_u, \gamma_v \rangle \leq |\Gamma(u) \cap \Gamma(v)|$ since each node $w \in \Gamma(u) \cap \Gamma(v)$ is hashed to a bucket by CountMin, which will be a coordinate that is set to $1$ in both $\gamma(u)$ and $\gamma(v)$. Also the probability that $w$ hashes to a bucket containing a $\hat{w} \in \Gamma(u) \cap \Gamma(v)$ with $\hat{w}\neq w$ is at most $d/(d/(z \tau)) = z \tau,$ and the expected number of $w$ with this property is at most $d z \tau$. By a Markov bound, the number of such $w$ is at most $d\tau/2$ with probability at least $1-2/z$, as desired.
\end{proof}
By the previous lemma we can replace the original dimension-$M$ vectors $\chi_u$ with the potentially much smaller dimension-$d/(z \tau)$-vectors $\gamma_u$ with a small price in accuracy and success probability.

\subsection{Combining \algname{} with Hash-Join}

The LSH-based approach of Hu et. al~\cite{hu2019output} suggests (in our framework) an alternate strategy of sub-partitioning the survival sets, using a hash-join to distribute the brute-force all-pairs algorithm. Here we analyze this combined approach and plot the tradeoffs.  We show that this strategy does not provide any benefit in the communication vs. computation tradeoff, perhaps surprisingly. 

The combined strategy, using $p$ processors, starts by using $k = p^c$ repetitions for a parameter $c<1$, and this is followed by a hash join on each survival set. More precisely, we first construct $k$ sets $S_1,\ldots,S_k$ using the \filname{} survival procedure. Then, for each set $S_j$, we will process all pairs in $S_j \times S_j$ using $p/k = p^{1-c}$ machines. This can be implemented in one round, because all we need to do is estimate the size of each set $S_j$ approximately, that is, $|S_j| \approx \a^dN$. Then, we can implement the hash-join in a distributed fashion.

We first review the guarantees of the standard hash-join.

\begin{lemma}\label{lem:hash-join}
	For $N$ vectors and $p$ machines, a hash-join has expected total communication $N\sqrt{p}$  and expected $N^2/p$ work per machine.
\end{lemma}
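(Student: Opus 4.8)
The plan is to realize the hash-join through the standard grid (hypercube) partitioning scheme for a self-join, and then read off both quantities directly. First I would arrange the $p$ processors as a $\sqrt{p}\times\sqrt{p}$ grid, indexing each machine by a pair $(a,b)$ with $a,b\in[\sqrt{p}]$, and pick a single hash function $h:[N]\to[\sqrt{p}]$ drawn from a pairwise independent family. Each vector $u$ is routed to every machine in its row $\{(h(u),b):b\in[\sqrt{p}]\}$ and to every machine in its column $\{(a,h(u)):a\in[\sqrt{p}]\}$. Correctness follows because any pair $(u,v)$ is co-located at the machine $(h(u),h(v))$: the vector $u$ reaches it through the row $h(u)$ and $v$ reaches it through the column $h(v)$, so every pair is examined by at least one processor.

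For the communication bound, each vector lands on exactly the $2\sqrt{p}-1$ machines of its row and column (the two overlap in one cell), so summing over all $N$ vectors gives total communication $N(2\sqrt{p}-1)=O(N\sqrt{p})$, matching the claim.

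For the per-machine work, I would observe that the processor at $(a,b)$ holds exactly the row-bucket $B_a=\{u:h(u)=a\}$ and the column-bucket $B_b=\{v:h(v)=b\}$, and the brute-force all-pairs step compares $|B_a|\cdot|B_b|$ pairs. Since the lemma only asserts an expectation, no concentration is needed; it suffices to bound $\mathbf{E}[\,|B_a|\,|B_b|\,]=\sum_{u,v}\Pr[h(u)=a,\ h(v)=b]$. Expanding with pairwise independence, an off-diagonal machine ($a\neq b$) contributes $N(N-1)/p$ (the $u=v$ term vanishes), while a diagonal machine ($a=b$) contributes the extra self term $N/\sqrt{p}$ on top of $N(N-1)/p$. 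Both are $O(N^2/p)$, which yields the stated $N^2/p$ expected work per machine.

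The main obstacle — really the only subtlety — is the self-join structure: because we compare the set against itself, the same hash function governs both the row and the column axis, so $|B_a|$ and $|B_b|$ are not independent and the diagonal machines carry an additive $N/\sqrt{p}$ term absent from the textbook two-relation join. Verifying that this term is dominated by $N^2/p$ (equivalently, that $N\geq\sqrt{p}$, which holds in the relevant regime $p=O(N)$) is the one place where care is required, and it is what I would check explicitly before concluding.
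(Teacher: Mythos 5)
The paper states this lemma without proof, presenting it only as a review of the standard hash-join guarantee, and your $\sqrt{p}\times\sqrt{p}$ grid-partitioning argument is precisely the standard construction being invoked. Your derivation is correct, including the one genuinely delicate point --- the extra $N/\sqrt{p}$ self-term on the diagonal machines arising from the self-join, which is indeed dominated by $N^2/p$ in the regime $p\leq N^2$ (and in particular for $p=O(N)$ as used throughout the paper).
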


We use this bound to compute the communication and work, when using a hash-join to process each survival set.

\begin{theorem}
	The combined approach has  expected total communication  $N^{1 +\frac{c(1-\tau)}{2-\tau}} p^{\frac{1- c}{2}}$ and  expected  $N^2 /p^{1+\frac{c\tau}{2-\tau}}$ work per processor.
\end{theorem}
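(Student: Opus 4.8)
The plan is a direct substitution into the per-set hash-join bounds of Lemma~\ref{lem:hash-join}, followed by summation over the $k$ survival sets and simplification of the resulting exponents. First I would fix the parameters of the combined scheme: we take $k = p^c$ repetitions, and we set $\alpha$ according to~(\ref{eqn:key}), so that $\alpha^d = (2/k)^{1/(2-\tau)} = \Theta\!\big(p^{-c/(2-\tau)}\big)$. Recall from~(\ref{eqn:survive}) that each node survives a fixed repetition with probability $\alpha^d$, so $\mathbb{E}[|S_j|] = \alpha^d N$, and (as in the load-balancing analysis) $|S_j|$ concentrates around this value; I will use $|S_j| \approx \alpha^d N$ throughout.

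Next I would invoke Lemma~\ref{lem:hash-join} on a single survival set $S_j$, run with $m = p/k = p^{1-c}$ machines. This gives expected communication $|S_j|\sqrt{m} = \alpha^d N \cdot p^{(1-c)/2}$ for that set, and expected work $|S_j|^2/m = (\alpha^d N)^2/p^{1-c}$ per machine assigned to it. Since all $p$ machines are used (there are $k$ sets, each using $p/k$ machines, and $k\cdot(p/k)=p$), the expected work per processor equals the per-machine work for its single assigned set, namely $(\alpha^d N)^2/p^{1-c}$. Substituting $\alpha^d = \Theta(p^{-c/(2-\tau)})$, the denominator exponent simplifies as $1-c+\tfrac{2c}{2-\tau} = 1 + \tfrac{c\tau}{2-\tau}$, giving the claimed $N^2/p^{1+c\tau/(2-\tau)}$.

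For the total communication I would sum over the $k$ survival sets using linearity of expectation, obtaining $k\cdot \alpha^d N\cdot p^{(1-c)/2} = \alpha^d N\sqrt{kp}$. To reach the stated mixed form I would separate the two factors: keep the hash-join factor $\sqrt{p/k} = p^{(1-c)/2}$ as the $p$-power, and write the remaining $k|S_j| = p^c\cdot \alpha^d N = N\, p^{c(1-\tau)/(2-\tau)}$ as the $N$-power. Collecting these reproduces the claimed expression $N^{1+c(1-\tau)/(2-\tau)}\,p^{(1-c)/2}$.

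The only real subtlety -- and the step I would be most careful about -- is the bookkeeping in the communication expression. The stated form mixes powers of $N$ and $p$, and it coincides with the clean single-variable bound $\alpha^d N\sqrt{kp} = N\,p^{1/2 - c\tau/(2(2-\tau))}$ precisely in the regime $p = \Theta(N)$ that this paper targets. I would make this identification explicit, replacing $p^{c(1-\tau)/(2-\tau)}$ by $N^{c(1-\tau)/(2-\tau)}$ under $p=\Theta(N)$, so that the exponent arithmetic is transparent. Everything else is routine algebra, together with the concentration of $|S_j|$ that is used implicitly when we approximate $\mathbb{E}[|S_j|^2]$ by $(\alpha^d N)^2$ for the work bound (this can alternatively be bounded by the profile quantity $\Phi$ if a fully worst-case statement is desired).
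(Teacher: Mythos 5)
Your proposal is correct and follows essentially the same route as the paper's proof: set $k=p^c$, use $\alpha^d = \Theta(p^{-c/(2-\tau)})$ from~(\ref{eqn:key}), apply Lemma~\ref{lem:hash-join} to each survival set with $p^{1-c}$ machines, and sum over the $k$ groups. Your explicit flagging of the $p=\Theta(N)$ identification needed to convert $p^{c(1-\tau)/(2-\tau)}$ into the $N$-exponent of the stated bound is in fact a point the paper's proof performs silently (it multiplies by ``$N^c$'' groups where the group count is $p^c$), so that care is welcome but does not change the argument.
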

\begin{proof}
	When $k = N^c$, we have that $\a^d = k^{\frac{-c}{2-\tau}}$, and hence, we have $|S_j| = \a^dN  = N p^{-\frac{c}{2-\tau}}$ in expectation. We use Lemma~\ref{lem:hash-join} to analyze the hash-join for each of the $p^c$ groups of $p^{1-c}$ processors. Each group handles $N' = Np^{-\frac{c}{2-\tau}}$ inputs, and therefore the communication of the group is $N'\cdot p^{1/2 - c/2}$, which is  $N^{1-  \frac{c}{2-\tau}} p^{\frac{1- c}{2}}$. Multiplying by $N^c$, the exponent of $N$ becomes 
	$$1 + c -\frac{c}{2-\tau} = 1 +\frac{c(1-\tau)}{2-\tau},$$ which gives the claimed communication bound. 
	For the per processor work, we have that this is the claimed bound:
	$$(N')^2 / p^{1-c} =  N^2 p^{-1+c-\frac{2c}{2-\tau}} = N^2 p^{-1-\frac{c\tau}{2-\tau}}.$$
\end{proof}

\begin{figure}[t]
    \centering
    \includegraphics[scale=0.65]{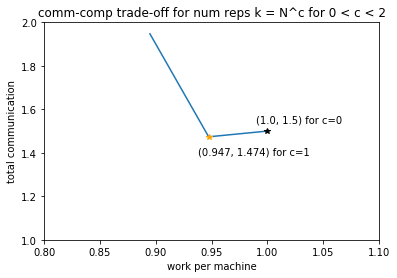}
    \caption{For $\tau =0.1$ and $p=N$, a comparison of \algname{} and the combined approach of using hash-join to distribute the brute-force all-pairs. We plot the exponent of $N$ for the different settings of $k = N^c$ repetitions for $0 < c < 2$.}
    \label{fig:comb-tradeoff}
\end{figure}
Figure~\ref{fig:comb-tradeoff} demonstrates that the combination approach is never better than the original \algname{} approach.
For a comparison, we consider $p = N$ processors, and hence, the number of repetitions will be $k = N^c$ for $0 < c < 2$. 
Then, when $c \geq 1$, the survival procedure  has  expected total communication $N^{1 + \frac{c(1-\tau)}{2-\tau}}$, and it has expected  $N^{1 -\frac{c\tau}{2-\tau}}$ work per processor.
And, when $c \leq 1$ we have that the combined approach has  expected total communication $N^{\frac{3}{2} -\frac{c \tau}{2(2-\tau)}}$, and it has expected  $N^{1 -\frac{c\tau}{2-\tau}}$ work per processor. Notice that $c = 1$ corresponds to standard \algname{}, and $c = 0$ corresponds to using a hash-join on the whole dataset.

\subsection{When the Similarity Graph is a Matching}

Recall that to recover all close pairs with high probability, we need to iterate the \algname{} algorithm $O(\log N)$ times, because each time finds a constant fraction of close pairs. We exhibit an improvement using multiple communication steps when the similar pairs are structured. An important application of all-pairs similarity is constructing the similarity graph. In our setting, the {\em similarity graph} connects all pairs $v,v' \in V$ such that their cosine similarity is at least $\tau$. The structure that we consider is when the similarity graph happens to be a {\em matching}, containing exactly $N/2$ disjoint pairs $v,v'$ with similarity at least $\tau$. 

The key idea is that each iteration decreases the number of input nodes by a constant fraction. We will remove these nodes (or at least one endpoint from each close pair) from consideration, and then repeat the procedure using the remaining nodes. We observe that this method can also be extended to near-matchings (e.g., small disjoint cliques). Similarly, our result is not specific to \algname{}, and the technique would work for any LSF similarity join method. 

We state our result using the $r$-th iterated log function $\log^{(r)} N$, where $\log^{(1)} N = \log N$, and $\log^{(r)} N = \log(\log^{(r-1)} N)$  for $r\geq 2$. Then, we show:

\begin{theorem}
Using $2r-1$ communication steps, we can find all but a negligible fraction of close pairs when the similarity graph is a matching. The
total communication and computation is $O(\log^{(r)} N)$  times the cost of one execution of \algname{}.
\end{theorem}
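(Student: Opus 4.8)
The plan is to run \algname{} in $r$ phases that exploit the matching structure: after each phase I discard every node that has already been reported in a close pair, so that subsequent phases operate on a rapidly shrinking active set. Write $\mathrm{cost}(n)$ for the communication-plus-computation of a single execution of \algname{} on $n$ nodes (as in Theorem~\ref{thm:comm-work}), so $\mathrm{cost}(N)$ is the cost of one execution on the full input. In phase $t$ I would run $\beta_t$ independent copies of \algname{} in parallel (still one round) on the current active set of $n_{t-1}$ nodes, where $n_0 = N$. By Lemma~\ref{lem:variance} a single copy reports a fixed close pair with probability at least $1/2$, so after $\beta_t$ independent copies a fixed undiscovered pair stays undiscovered with probability at most $2^{-\beta_t}$. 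Because the similarity graph is a matching, every node lies in exactly one close pair, so once a pair is reported I may remove both endpoints; hence by linearity the expected number of active nodes contracts as $\mathbb{E}[n_t] \le n_{t-1}\,2^{-\beta_t}$.

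The heart of the argument is choosing the amplification schedule $\beta_1,\dots,\beta_r$ so that the active set becomes negligible after $r$ phases while the weighted cost telescopes. Setting $L_t := \log_2(N/n_t)$, I would target $L_t = \log^{(r-t+1)} N$, i.e. $n_t = N\,2^{-\log^{(r-t+1)}N}$, which gives $L_r = \log N$ and hence $\mathbb{E}[n_r] \le 1$ (pushing $L_r$ slightly past $\log N$ makes the expected remainder polynomially small, so all but a negligible fraction of pairs are found). This forces $\beta_t = L_t - L_{t-1} \approx \log^{(r-t+1)}N$. Since $\mathrm{cost}$ grows at least linearly in the number of nodes, $\mathrm{cost}(n_{t-1}) \le (n_{t-1}/N)\,\mathrm{cost}(N) = 2^{-L_{t-1}}\,\mathrm{cost}(N)$, so phase $t$ costs at most $\beta_t\,2^{-L_{t-1}}\,\mathrm{cost}(N)$. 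For $t=1$ the weight is $2^{-L_0}=1$ and this contributes $\beta_1 = \log^{(r)} N$; for each $t \ge 2$ the weight $2^{-L_{t-1}} = 1/\log^{(r-t+1)}N$ exactly cancels $\beta_t \approx \log^{(r-t+1)}N$, contributing only $O(1)$. Summing over the $r$ phases gives total cost $O(\log^{(r)}N + r) = O(\log^{(r)}N)\cdot\mathrm{cost}(N)$. For the round count, each phase uses one round to run the parallel copies of \algname{} and one round to delete matched endpoints and rebalance the survivors (re-estimating the $|S_i|$ via the shared-seed machinery of \filname{}), except that the final phase needs no rebalancing, for $2r-1$ communication steps in total.

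The step I expect to be the main obstacle is controlling the \emph{actual} size $n_{t-1}$ of the active set at each phase, since the per-phase cost depends on $n_{t-1}$ rather than on $\mathbb{E}[n_{t-1}]$ and $\mathrm{cost}(\cdot)$ is convex. The non-discovery events for distinct pairs are independent \emph{across} the $\beta_t$ repetitions but are correlated \emph{within} a single repetition whenever the pairs' neighborhoods overlap, so a direct Chernoff bound on the number of survivors is unavailable. I would instead control $\mathbb{E}[n_t]$ by linearity (which needs no independence), bound its second moment by a pairwise-independence argument in the spirit of Lemma~\ref{lem:variance}, and apply Markov/Chebyshev phase-by-phase with a constant-factor slack in each $\beta_t$; since $r = O(1)$, a union bound over the phases then keeps every $n_{t-1}$ within a constant factor of its target $N\,2^{-L_{t-1}}$ with high probability, so the cost bound holds as stated, and the negligible-fraction guarantee follows from Markov applied to $\mathbb{E}[n_r]$. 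A minor point to check is that removing matched endpoints and recomputing survival sets on the reduced instance reuses the same shared random seed, so no extra rounds are incurred beyond the $2r-1$ already counted.
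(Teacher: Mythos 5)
Your proposal is correct and follows essentially the same route as the paper's proof: run \algname{} in $r$ phases with amplification schedule $\beta_t \approx \log^{(r-t+1)} N$ (the paper's $T_{r-t+1}$), delete matched endpoints between phases using one extra communication step each (except the last), and observe that the shrinking active set makes the first phase's $O(\log^{(r)} N)$ iterations dominate the telescoping cost sum. Your accounting is in fact more explicit than the paper's --- in particular you flag and sketch a fix for the concentration of the actual active-set sizes $n_t$ around their expectations, a point the paper passes over by arguing only in expectation.
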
 
\begin{proof}
For $r=1$, we simply run \algname{} $O(\log N)$ times independently in a single communication step, where each time finds a constant fraction of close pairs. For $2r-1 \geq 3$ communication steps, we will use $r$ rounds of \algname{}, and we will remove all found pairs between subsequent rounds (each round will take two communication steps, except for the last, which takes one). 

In the first round, we run \algname{} $T_r = O(\log^{(r)} N)$ times. Then, the expected number of pairs that are {\em not} found will be $O(N/2^{T_r})$, where $2^{T_r} = \mathrm{poly}(\log^{(r-1)}N)$. In the next round, with $r-1$ rounds remaining, we will only consider the remaining pairs, and we will iterate \algname{} $T_{r-1}$ times. We repeat this process until no more rounds remain, and output the close pairs from all rounds.

We can implement each round of the above algorithm using at most two communication steps. We do so by marking the found pairs between rounds using a single extra communication step. More formally,  the input pairs start partitioned across $p$ processors. We denote the input partition as $V = V_1 \cup \cdots \cup V_p$. After finding some fraction of close pairs, processor $i$ must be notified of which nodes in $V_i$ are no longer active. Whenever processor $j$ finds a close pair $(v,v')$, it sends the index of $v$  to processor $i$ such that $v \in V_i$ (and similarly for $v' \in V_{i'}$), where $i$ is known to processor~$j$ because processor $i$ must have sent $v$ to processor $j$ in  \algname{}. We reduce the total input set from $V$ to $V'$, where $V'$ denotes the remaining nodes after removing the found pairs.

To analyze this procedure, notice that the dominant contribution to the total communication and computation is the first round. This is because the subsequent rounds have a geometrically decreasing number of input nodes. The first round uses $T_r = O(\log^{(r)} N)$ iterations of \algname{}, which shows that overall communication and computation is $O(\log^{(r)} N)$  times the cost of one iteration.
\end{proof}


\subsection{Hashing to Improve Recall}
Not only is hashing helpful in order to reduce the description size of the neighborhood sets, as described in Section \ref{sec:hash1}, hashing can also be used to increase the number of similar pairs surviving a repetition, and thus the recall. Before, a node pair $(u,v)$ survives a repetition with probability $\alpha^{|\Gamma(u) \cup \Gamma(v)|}$. Hashing can, however, make $|\Gamma(u) \cup \Gamma(v)|$ smaller due to collisions. Suppose we hash the characteristic vector $\chi_u \in \{0,1\}^M$ of the neighborhood of a node $u$ down to $d/C$ dimensions for some parameter $C \geq 1$, obtaining the vector $\gamma_u \in \{0,1\}^{d/C}$, as in Section \ref{sec:hash1}. 
We could, for example, set $C = z\tau$ as in Section \ref{sec:hash1}. 

\begin{lemma}\label{lem:balls}
Thinking of $\gamma_u$ and $\gamma_v$ as characteristic vectors of sets, and letting $t = |\Gamma(u) \cup \Gamma(v)|$, we have $${\bf E}[|\gamma_u \cup \gamma_v|] = (d/C)(1-(1-C/d)^t) < t.$$
\end{lemma}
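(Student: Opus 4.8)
The plan is to reinterpret $|\gamma_u \cup \gamma_v|$ as the number of occupied buckets in a balls-into-bins experiment and then apply linearity of expectation. Recall from the CountMin construction in Section~\ref{sec:hash1} that the map sends each dimension $w \in U$ to a single bucket chosen uniformly and independently from the $s = d/C$ available buckets, and that $\gamma_u$ is the indicator vector of those buckets receiving at least one element of $\Gamma(u)$ (similarly for $\gamma_v$). The key observation I would make first is that thresholding commutes with the union: since a fixed $w$ always hashes to the same bucket regardless of whether we regard it as a member of $\Gamma(u)$ or of $\Gamma(v)$, a bucket is set to $1$ in $\gamma_u \cup \gamma_v$ exactly when it receives at least one element of $\Gamma(u) \cup \Gamma(v)$. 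Hence $|\gamma_u \cup \gamma_v|$ is precisely the number of nonempty buckets when the $t = |\Gamma(u) \cup \Gamma(v)|$ distinct elements of $\Gamma(u) \cup \Gamma(v)$ are hashed independently and uniformly into $s = d/C$ buckets.

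Next I would compute the expectation bucket by bucket. Fix a bucket $j \in [s]$. Each of the $t$ elements misses $j$ independently with probability $1 - 1/s = 1 - C/d$, so $j$ is empty with probability $(1-C/d)^t$ and therefore nonempty with probability $1 - (1-C/d)^t$. Summing the indicator of ``bucket $j$ nonempty'' over all $s = d/C$ buckets and applying linearity of expectation gives ${\bf E}[|\gamma_u \cup \gamma_v|] = s\,(1 - (1-C/d)^t) = (d/C)(1 - (1-C/d)^t)$, which is exactly the claimed equality. This part is a routine count and not where any difficulty lies.

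The only step that requires a genuine (if short) argument, and hence the main obstacle, is the strict inequality. I would obtain it from Bernoulli's inequality in the form $(1-C/d)^t \ge 1 - t(C/d)$, which is strict whenever $t \ge 2$ and $0 < C/d \le 1$; rearranging yields $1 - (1-C/d)^t < t(C/d)$, and multiplying through by $d/C$ gives ${\bf E}[|\gamma_u \cup \gamma_v|] < t$. The strictness hypotheses are satisfied in our regime, since in the right-regular setting $t = |\Gamma(u) \cup \Gamma(v)| \ge d \ge 2$ and $1 \le C \le d$. I would also record the combinatorial intuition behind the bound, which gives an alternative proof: each ball can turn at most one empty bucket nonempty, so $|\gamma_u \cup \gamma_v| \le t$ deterministically, while with positive probability two of the $t \ge 2$ elements collide in the same bucket (as $s < \infty$), making the inequality strict in expectation.
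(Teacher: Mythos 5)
Your proof is correct and follows essentially the same route as the paper: identify $|\gamma_u\cup\gamma_v|$ with the number of nonempty bins when $t$ balls are thrown into $d/C$ bins, and compute the expectation by linearity over per-bin indicators. Your Bernoulli-inequality step is a welcome addition, since the paper's own proof stops at the bound ${\bf E}[X]\leq d/C$ and never explicitly justifies the strict inequality $<t$ stated in the lemma.
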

\begin{proof}
Let $X_i = 1$ be an indicator random variable for the event that $i$-th bin is non-empty when throwing $t$ balls into $d/C$ bins. If the bin is empty, then let $X_i = 0$. Then ${\bf E}[X_i] = 1-(1-C/d)^t$, and so ${\bf E}[X] = d/C - (d/C)(1-C/d)^t = d/C (1-(1-C/d)^t) \leq d/C,$ where $X = |\gamma_u \cup \gamma_v|$ is the total number of non-empty bins. 
\end{proof}
By Lemma \ref{lem:balls}, the expected size of the union of the neighborhoods drops after hashing. This is useful, as the survival probability of the node pair $(u,v)$ in a repetition after hashing is now $\alpha^{|\gamma(u) \cup \gamma(v)|}$, which by the previous lemma is larger than before since $|\gamma(u) \cup \gamma(v)| \leq |\Gamma(u) \cup \Gamma(v)|,$ and this inequality is strict in expectation. Note, however, that the communication and work per machine increase in expectation, but this tradeoff may be beneficial. 

\section{Experimental Results}
\label{sec:experiments}

In this section, we complement the theoretical analysis presented
earlier with experiments that measure the recall and efficiency of \algname{} on 
three real world graphs from the SNAP repository~\cite{snapnets}: WikiVote, PhysicsCitation, and Epinions. In accordance with our motivation, we also run \algname{} on an extremely skewed synthetic graph, on which the WHIMP algorithm fails.

\begin{table*}[t]
		\centering\small
		\setlength{\tabcolsep}{9pt}
		\renewcommand{\arraystretch}{1.5}
			\begin{tabular}{lrr|rc|cc}
				\toprule
				\multirow{2}{*}{Dataset} & \multirow{2}{*}{$N$} & \multirow{2}{*}{$M$} & \multicolumn{2}{c}{Communication Cost} \vline  & \multicolumn{2}{c}{Recall} \\
				\cline{4-7}
				& & & \algname{} & $\textrm{WHIMP}^\dagger$  & \algname{} & WHIMP\\
				\midrule
				WikiVote & 7K & 104K & 710MB ($\sum_i|S_i|=71M$, $\beta=30$) & 60MB & 100\% & 100\%\\
				Citation & 34K & 421K & 410MB ($\sum_i|S_i|=41M$, $\beta=1$) & 50MB & 100\% & 100\% \\
				Epinions & 60K & 500K & 6GB ($\sum_i|S_i|=573M$, $\beta=1$) & 60MB & 100\% & 100\% \\
				Synthetic & 10M & 200M & 160GB ($\sum_i|S_i|=8B$, $\beta=50$) & Failed & 90\% &  --- \\
				\bottomrule
				\end{tabular}
				\caption{Summary of the performance of \algname{} and WHIMP on the four datasets, in terms of communication cost and recall (precision for WHIMP was also high). We note that \algname{} was run at the minimum number of independent iterations $\beta$ to achieve high recall for $\tau = 0.1$.\\
				* The communication cost of \algname{} depends on the number of survivors, which we note along with the value of $\beta$.\\
				$\dagger$ WHIMP communication cost is dominated by shuffling SimHash sketches. We use 8K bits for SimHash, as suggested in~\cite{SSG17}.}
			    \label{tab:datasets}
\end{table*}

\subsubsection*{Experimental Setup} We compare \algname{} against
the state of the art WHIMP algorithm from~\cite{SSG17}, and hence our
setup is close to the one for WHIMP. In this vein, we transform our graphs
into bipartite graphs, either by orienting edges from left to right
(for directed graphs), or by duplicating nodes on either side (for
undirected ones). This is in accordance with the setup of the left side denoting sets and the right side denoting nodes that is described in the introduction. Also, we pre-filter each bipartite graph to have a
narrow degree range on the right (the left degrees can still be
$O(n)$) to minimize variance in cosine similarity values
due to degree mismatch. This makes the experiments cleaner, and the algorithm itself can run over all degrees
in a doubling manner. We use sparse matrix multiplication for
computing all-pairs similarity after computing the survivor sets $S_i$ for each bucket $i$, as it is quite fast in practice and consumes $d\cdot
O(|S_i|)$ memory on each server. Finally, even though we
computed a theoretically optimal value of $\alpha$ earlier, in
practice, a smaller choice of $\alpha$ often suffices in combination
with repeating the \filname{} method for $\beta \geq 1$ independent
iterations. 

For each of the graphs, we run \algname{} on the graph on a distributed MapReduce platform internal to Google, and compare the
output similar pairs against a ground truth set generated from a
sample of the data. The ground truth set is generated by doing an
exact all-pairs computation for a small subset of nodes chosen at
random. Using this ground truth, we can measure the efficacy of the
algorithm, and the measure we focus on for the evaluation is the
recall of similar pairs\footnote{The precision is dependent on the
method used to compute all-pairs similarity in a bucket, and since we
use sparse matrix multiplication, for us this is
100\%.}. Specifically, let the set of true similar pairs in the ground
truth with similarity at least $\tau$ be denoted by $S$. Furthermore,
let the set of similar pairs on the same set of nodes that are
returned by the algorithm be $\hat{S}$. Then, the recall $R =
\frac{|\hat{S} \cap S|}{|S|}$. For a fixed value of $\tau$, we can
measure the change in recall as the number $\beta$ of independent
iterations varies (with fixed $\a$ and $k =N$). We run our experiments
at a value of $\beta$ that achieves high recall (which is a strategy that carries across datasets), and the results are summarized in Table~\ref{tab:datasets} for ease of comparison. There is a synthetic dataset included in the table, which is described later. The communication cost for \algname{} is dependent on the number of survivors, which in turn depends on the choice of $\beta$. We do ignore a subtlety here in that the communication cost will actually often be much less than the number of survivors, since multiple independent repetitions will produce many copies of the same node and hence we can only send one of those copies to a processor. 

We reiterate that our experimental comparison is only against the WHIMP algorithm as the WHIMP paper demonstrated that commonly used LSH-based techniques are provably worse. Since WHIMP is only applicable in the scenario where there are no high degree left nodes, our three public graphs are those for which this assumption holds in order to be able to do a comparison. Since the WHIMP algorithm has output-optimal communication complexity, we expect WHIMP to have lower communication cost than \algname{}, as WHIMP's communication cost is dominated by the number of edges in the graph. This is indeed seen to be the case from Table~\ref{tab:datasets}. However, \algname{} trades-off higher communication cost with the benefit of load balancing across individual servers. WHIMP does not do any load balancing in the worst case, which can render it inapplicable for a broad class of graphs, as we shall see in the next section. Indeed, the WHIMP job failed for our synthetic graph. 

\begin{figure}[t]
	\centering
	\subfloat[][Recall at $\tau=0.1$]{
		\includegraphics[width=0.45\textwidth]{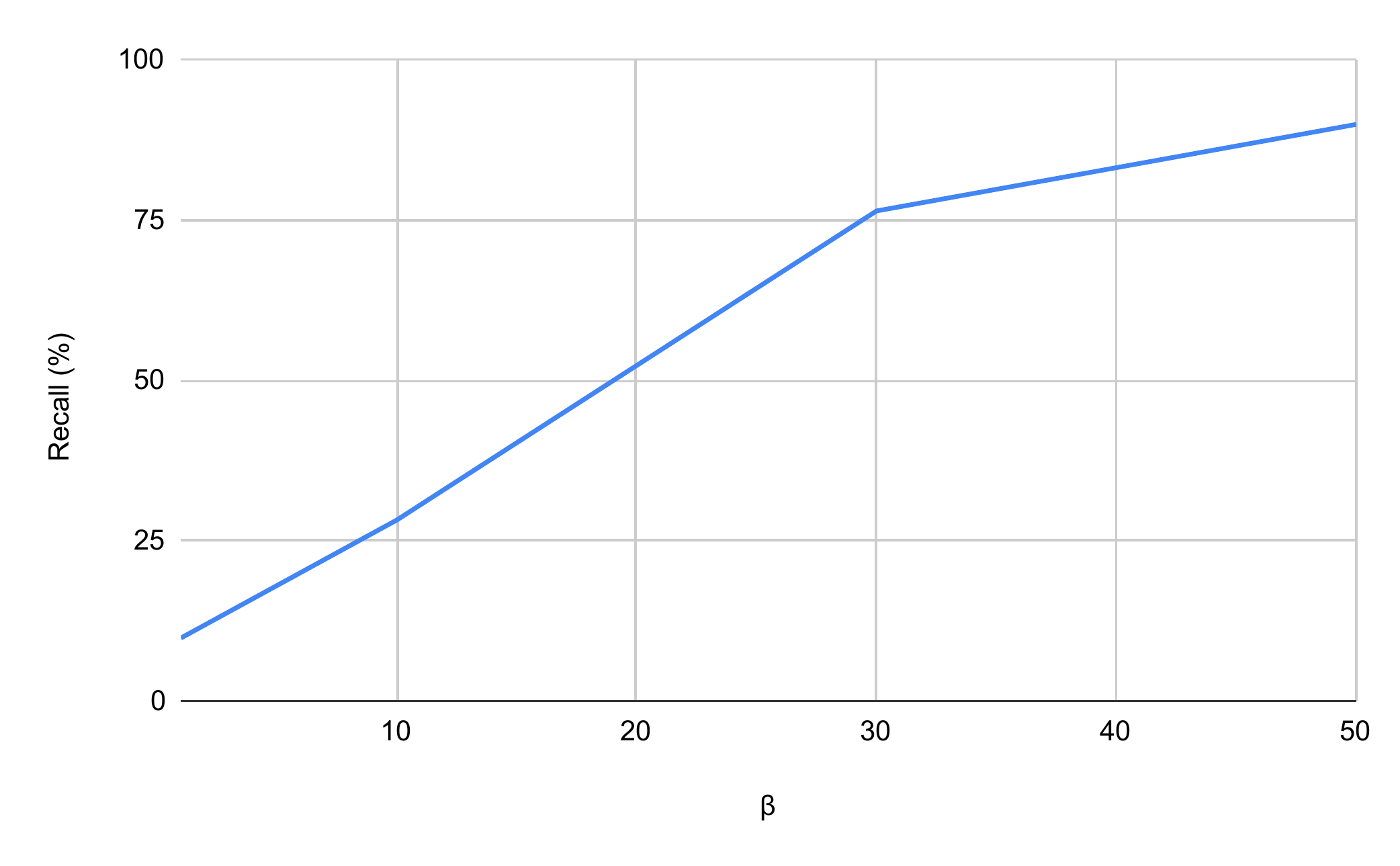}
		\label{fig:recall_random}}
	\subfloat[][Number of Survivors]{
		\includegraphics[width=0.45\textwidth]{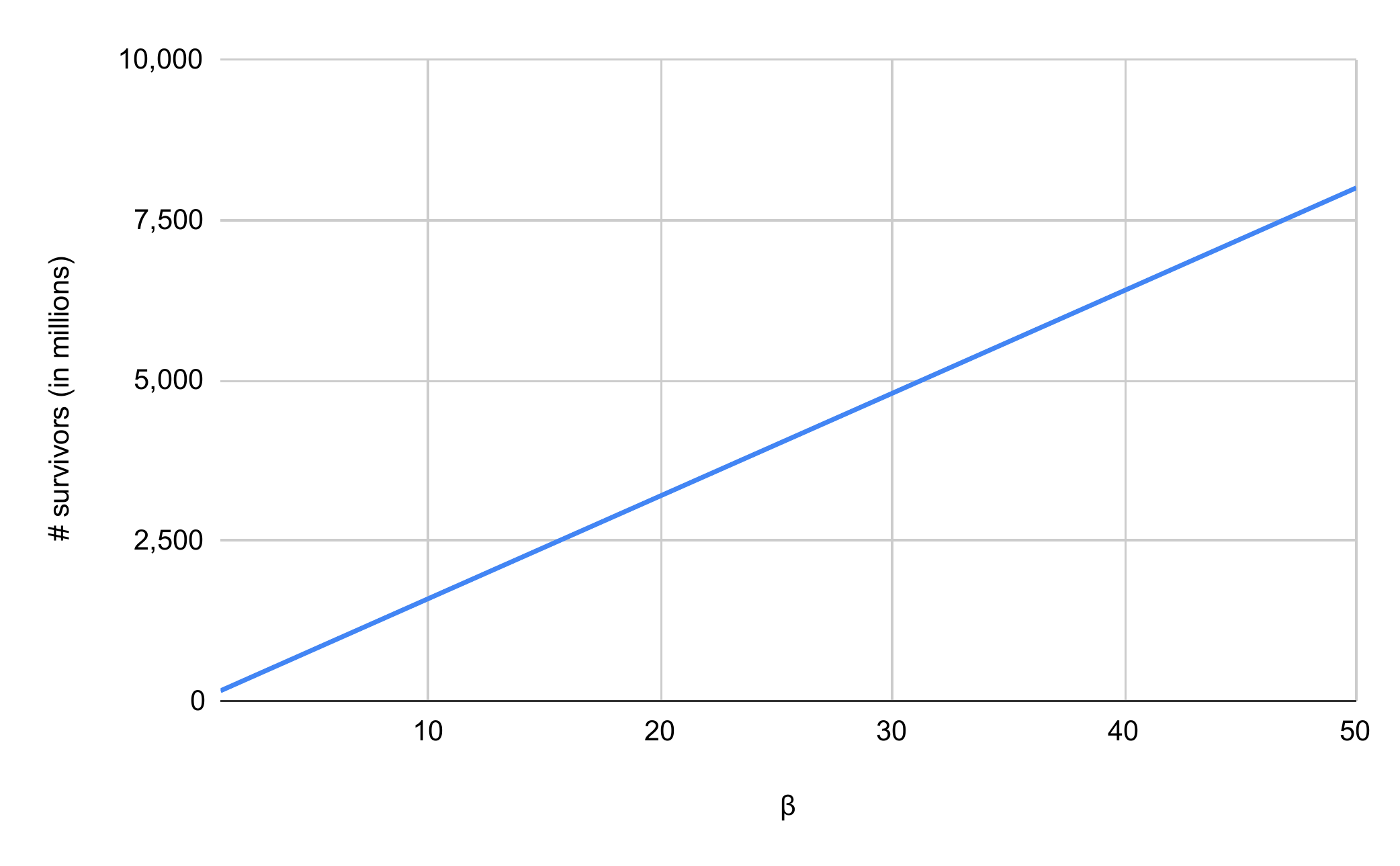}
		\label{fig:survivors_random}} 
	\caption{Recall and number of survivors as $\beta$ increases for the synthetic skewed graph.}
	\label{fig:globfig}
\end{figure}

\subsection{Synthetic Graph With Extreme Skew} \label{sec:synthetic}
To illustrate a case that WHIMP fails to address, we present results
on a synthetic graph that contains the core element of skeweness that
we set out to address in this work. We anticipate that the same results will hold for several real world settings, but a synthetic graph is sufficient for comparison with WHIMP. Indeed, the motivation for this randomly
generated synthetic graph comes from user behavior where even though
users consume almost the same amount of content (say, videos) online,
the content being consumed sees a power-law distribution (e.g., some
videos are vastly more popular than others). A simplified setting of
the same phenomenon can be captured in the following random bipartite
graph construction: we build an $N \times N$ bipartite graph
$G(U,V,E)$, where each right node has degree $d$. Each right node
$v\in V$ chooses to connect to left nodes as follows: first pick $d/2$
nodes at random (without replacement) from a small set of {\em hot}
nodes $H\subset U$, and pick $d/2$ nodes at random (again, without
replacement) from the rest of $U\setminus H$. If $|H| = \gamma\cdot
d$, and $|H|\ll N$, this results in right nodes having pairwise cosine
similarity that scale with $1/\gamma$ while the hot dimensions have
degree $O(n)$ for constant $\gamma$. In this setting, we expect wedge
sampling-based methods to fail since the hot dimensions have large
neighborhoods.

We constructed such a synthetic random bipartite graph with the
following parameters: $N=10 \textrm{ million}$, $d=20$, and $\gamma =
10$. Then, we repeated the same experiment as the one described above
for the real world graphs. This time, we noted that WHIMP
failed as the maximum degree for left nodes was around $500K$. We were
able to run our procedure though, and the recall and the communication cost of the \filname{} procedure is shown in
Table~\ref{tab:datasets}. 
The recall of the \filname{} procedure is shown in Fig~\ref{fig:recall_random}, and the number of survivors in Fig~\ref{fig:survivors_random}. Note that, as before, we are able to achieve high recall even on this graph with a heavily skewed degree distribution, with reasonable communication cost.



\section{Conclusion}
We present a new distributed algorithm, \algname{}, for approximate all-pairs set similarity search. The key idea of the algorithm is the use of a novel LSF scheme. We  exhibit an efficient version of this scheme that runs in nearly linear time, utilizing pairwise independent hash functions. We show that \algname{} effectively finds low similarity pairs in high-dimensional datasets with extreme skew. Theoretically, we provide guarantees on the accuracy, communication, and work of \algname{}. Our algorithm improves over hash-join and LSH-based methods. Experimentally, we show that \algname{} achieves high accuracy on real and synthetic graphs, even for a low similarity threshold. Moreover, our algorithm succeeds for a graph with extreme skew, whereas prior approaches fail.

\paragraph{Acknowledgments.} 
Part of this work was done while D. Woodruff was visiting Google Mountain View. D. Woodruff
also acknowledges support from the National 
Science Foundation
Grant
No. CCF-1815840.

\bibliographystyle{plain}
\bibliography{references}

\end{document}